\documentclass[reqno,fleqn,12pt]{amsart}
\usepackage{amsmath}
\usepackage{latexsym}
\usepackage{amssymb}
\usepackage{graphics}

\usepackage{tikz}
\usetikzlibrary{calc}
\usetikzlibrary{positioning}

\textwidth6in
\setlength{\hoffset}{0pt}
\setlength{\oddsidemargin}{0pt}\setlength{\evensidemargin}{0pt}

\newtheorem{theorem}{Theorem}     
\newtheorem{corollary}{Corollary} 
\newtheorem{lemma}{Lemma} 

\newtheorem{remark}{Remark} 
\newtheorem{definition}{Definition} 

\def\rbm{\textsc{rainbow matching}}

\def\mrbm{\textsc{max rainbow matching}}

\newcommand{\eps}{\varepsilon}

\title{Complexity Results for Rainbow Matchings}
\author[V.B. Le]{Van Bang Le}
\curraddr[VBL]{Universit\"at Rostock\\ Institut f\"ur Informatik\\ 18051 Rostock, Germany}
\email{le@informatik.uni-rostock.de}
\author[F. Pfender]{Florian Pfender}
\curraddr[FP]{University of Colorado at Denver, Department of Mathematics \& Statistics,\\  
  Denver, CO 80202, USA}
\email{florian.pfender@ucdenver.edu}
\keywords {Rainbow matching; computational complexity; NP-completeness; 
APX-completeness; parameterized complexity
}
\subjclass {}

\begin{document}

\begin{abstract}
A rainbow matching in an edge-colored graph is a matching whose edges have
distinct colors. We address the complexity issue of the following problem, 
\mrbm: Given an edge-colored graph $G$, how large is the largest rainbow 
matching in $G$? We present several sharp contrasts in the complexity of 
this problem. 

We show, among others, that 
\begin{itemize}
 \item \mrbm\ can be approximated by a polynomial algorithm with approximation ratio 
       $2/3-\eps$. 
 \item \mrbm\ is APX-complete, even when restricted to properly edge-colored 
    linear forests without a $5$-vertex path, and is solvable in 
    polynomial time for edge-colored forests without a $4$-vertex path.
 \item \mrbm\ is APX-complete, even when restricted to properly edge-colored 
    trees without an $8$-vertex path, and is solvable in 
    polynomial time for edge-colored trees without a $7$-vertex path.
 \item \mrbm\ is APX-complete, even when restricted to properly edge-colored paths.    
\end{itemize}
These results provide a dichotomy theorem for the complexity of the 
problem on forests and trees in terms of forbidding paths. 
The latter is somewhat surprising, since, to the best of our knowledge, 
no (unweighted) graph problem prior to our result is known to be NP-hard 
for simple paths.  

We also address the parameterized complexity of the problem.
\end{abstract}

\maketitle

\section{Introduction and Results}

Given a graph $G=(E(G),V(G))$, an edge coloring is a function 
$\phi : E(G)\rightarrow \mathcal{C}$ mapping each edge 
$e\in E(G)$ to a color $\phi(e)\in \mathcal{C}$; $\phi$ is a \emph{proper} edge-coloring 
if, for all distinct edges $e$ and $e'$, $\phi(e)\not=\phi(e')$ whenever $e$ and $e'$ have 
an endvertex in common. A (properly) edge-colored graph $(G,\phi)$ is a pair of a graph 
together with a (proper) edge coloring. A {\em rainbow subgraph} of an edge-colored graph 
is a subgraph whose edges have distinct colors. Rainbow subgraphs appear frequently in the 
literature, for a recent survey we point to~\cite{KL}.

In this paper we are concerned with \emph{rainbow matchings}, i.e., matchings whose edges 
have distinct colors. One motivation to look at rainbow matchings is Ryser's famous conjecture 
from~\cite{Ry}, which states that every Latin square of odd order contains a Latin 
transversal. Equivalently, the conjecture says that every proper edge coloring of the 
complete bipartite graph $K_{2n+1,2n+1}$ with $2n+1$ colors contains a rainbow matching 
with $2n+1$ edges. 

One often asks for the size of the largest rainbow matching in an edge-colored graph with 
certain restrictions (see, e.g., \cite{DFLMPW,KY,LT,Wa}). In this paper, we consider the 
complexity of this problem.
In particular, we consider the complexity of the following two problems, and we will
restrict them to certain graph classes and edge colorings.

\medskip
\noindent
\rbm\\[1ex]
\begin{tabular}{ll}
Instance: & Graph $G$ with an edge-coloring and an integer $k$\\
Question: & Does $G$ have a rainbow matching with at least $k$ edges?\\
\end{tabular}

\medskip
\noindent
\rbm\ is also called \textsc{multiple choice matching} in \cite[Problem GT55]{GJ}. 
The optimization version of the decision problem \rbm\ is:

\medskip
\noindent
\mrbm\\[1ex]
\begin{tabular}{ll}
Instance: & Graph $G$ with an edge-coloring\\
Output:   & A largest rainbow matching in $G$.
\end{tabular}

\medskip
\noindent
Only the following complexity result for \rbm\ is known. 
Note that the considered graphs in the proof are not properly edge colored. 
When restricted to properly edge-colored graphs, no complexity result is known prior to this 
work.

\begin{theorem}[\cite{IRT}]\label{thm0}
\rbm\ is NP-complete, even when restricted to edge-colored bipartite graphs.
\end{theorem}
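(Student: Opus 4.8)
The plan is a polynomial reduction from \textsc{3-sat}. Membership in NP is immediate, since a rainbow matching with at least $k$ edges is a polynomial-size certificate that can be verified in polynomial time, so only hardness needs an argument. Fix a $3$-CNF formula $\Phi$ with variables $x_1,\dots,x_n$ and clauses $C_1,\dots,C_m$; after a standard preprocessing we may assume every clause has three distinct literals and every variable occurs at least twice (a variable occurring only once, say positively in a clause $C$, can be set so as to satisfy $C$, and then $C$ and that variable are deleted; iterate). Write $n_i$ for the number of occurrences of $x_i$ in $\Phi$, so that $\sum_i n_i=3m$, and number the occurrences of $x_i$ by $1,\dots,n_i$.

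For the variable gadget of $x_i$ I take a cycle $Z_i$ of even length $2n_i$ with vertices $g_{i,1},h_{i,1},g_{i,2},h_{i,2},\dots,g_{i,n_i},h_{i,n_i}$ in cyclic order, set $\alpha_{i,s}=g_{i,s}h_{i,s}$ and $\beta_{i,s}=h_{i,s}g_{i,s+1}$ (indices mod $n_i$), and give every edge of every $Z_i$ its own private colour. A cycle has exactly two perfect matchings, here $M_i^{\alpha}=\{\alpha_{i,1},\dots,\alpha_{i,n_i}\}$ and $M_i^{\beta}=\{\beta_{i,1},\dots,\beta_{i,n_i}\}$; I let $M_i^{\alpha}$ encode ``$x_i$ false'' and $M_i^{\beta}$ encode ``$x_i$ true''. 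The $s$-th occurrence of $x_i$ is associated with $\alpha_{i,s}$ if it is positive and with $\beta_{i,s}$ if it is negative, so the colour associated with a positive occurrence lies in a perfect matching of $Z_i$ exactly when $x_i$ is false, and symmetrically for negative occurrences. For the clause gadget of $C_j$ with literals $\ell_{j,1},\ell_{j,2},\ell_{j,3}$ I take a star $K_{1,3}$ with centre $c_j$ and leaves $d_{j,1},d_{j,2},d_{j,3}$, and I colour the edge $c_jd_{j,r}$ with the colour already attached (above) to the occurrence of the literal $\ell_{j,r}$. The graph $G$ so obtained is a disjoint union of even cycles and of stars, hence bipartite, and incidentally every colour class has at most two edges. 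I set $k=\sum_i n_i+m=4m$, which is exactly the matching number of $G$, since $Z_i$ contributes $n_i$ and each star contributes $1$.

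It then remains to show $G$ has a rainbow matching with at least $k$ edges iff $\Phi$ is satisfiable. If $M$ is such a matching then $|M|=k$, so by the matching number $M$ meets each $Z_i$ in a perfect matching and matches each $c_j$ to some leaf $d_{j,r(j)}$; let $\tau$ be the assignment read off from the $M\cap Z_i$. Since $c_jd_{j,r(j)}\in M$ carries the colour of the occurrence-edge of $\ell_{j,r(j)}$ and $M$ is rainbow, that occurrence-edge is not in $M$, so $M\cap Z_i$ is the perfect matching of $Z_i$ \emph{not} containing it; by our convention this is precisely the statement that $\ell_{j,r(j)}$ is true under $\tau$, so $\tau$ satisfies $\Phi$. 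Conversely, given a satisfying $\tau$, choose $M_i^{\alpha}$ or $M_i^{\beta}$ according to $\tau$, pick for each $C_j$ a literal made true by $\tau$, and add the corresponding star edge; that star edge refers to an occurrence whose associated cycle edge was \emph{not} chosen, and distinct star edges refer to distinct occurrences, so all colours used are pairwise distinct and we get a rainbow matching of size $k$.

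The construction is not hard; the points that need care are the preprocessing (so that no $Z_i$ degenerates to a double edge when $n_i=1$) and, above all, getting the orientation of the truth-value/perfect-matching correspondence right, so that ``the shared colour forces the occurrence-edge out of $M$'' translates into ``the literal used by $C_j$ is true'' rather than ``false''. The reason no rainbow matching can exceed $k$ edges is simply that $k$ already equals the ordinary matching number of $G$; this is what lets the maximality argument in the forward direction go through with no extra work, and it is the one place where the shape of the gadgets (in particular, using only $K_{1,3}$'s for the clauses) is doing genuine work.
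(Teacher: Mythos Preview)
The paper does not give a proof of this statement; Theorem~\ref{thm0} is quoted from Itai, Rodeh, and Tanimoto~\cite{IRT} as background and is immediately superseded by the paper's own, much sharper hardness results (Theorems~\ref{thm:2fac}--\ref{thm:apxtree}). There is therefore nothing in the paper to compare your argument against.

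On its own merits your reduction from \textsc{3-sat} is correct. The two load-bearing facts both hold: first, $k=\sum_i n_i+m$ equals the ordinary matching number of $G$, so any rainbow matching of size $k$ must restrict to a perfect matching of each even cycle $Z_i$ and must pick exactly one edge from each clause star; second, the orientation of your encoding is the right one---a chosen star edge expels its same-coloured occurrence edge from $M$, and by your convention this forces the perfect matching of the corresponding $Z_i$ to be the one that makes the selected literal true. In the backward direction the colours on the chosen star edges are pairwise distinct because distinct clauses contribute distinct occurrences, and each such colour is absent from the cycle part of $M$ by the choice of $M_i^\alpha$ versus $M_i^\beta$. The preprocessing guaranteeing $n_i\ge 2$ is routine; if you prefer to avoid it, you can simply pad each $Z_i$ to length $\max(2n_i,4)$ with fresh private colours and adjust $k$ accordingly, which keeps the graph simple without touching the equivalence.
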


In this paper, we analyze classes of graphs for which \mrbm\ can be solved and 
thus \rbm\ can be decided in polynomial time, or is NP-hard. Our results are:

\begin{itemize}
 \item There is a polynomial-time $(2/3-\eps)$-approximation algorithm for \mrbm\ 
    for every $\eps>0$.
 \item \mrbm\ is APX-complete, and thus \rbm\ is NP-complete, even for very 
    restricted graphs classes such as
    \begin{itemize}
      \item edge-colored complete graphs,
      \item properly edge-colored paths, 
      \item properly edge-colored $P_8$-free trees in which every color is used at 
         most twice, 
      \item properly edge-colored $P_5$-free linear forests in which every color 
         is used at most twice,
      \item properly edge-colored $P_4$-free bipartite graphs in which every color 
         is used at most twice.
    \end{itemize}
    These results significantly improve Theorem~\ref{thm0}. 
    We also provide an inapproximability bound for each of the listed graph classes. 
  \item \mrbm\ is solvable in time $O(m^{3/2})$ for $m$-edge graphs without $P_4$ 
     (induced or not); in particular for $P_4$-free forests.
  \item \mrbm\ is polynomially solvable for $P_7$-free forests with bounded number 
     of components; in particular for $P_7$-free trees.
  \item \mrbm\ is fixed parameter tractable for $P_5$-free forests, when parameterized 
     by the number of the components.
\end{itemize}

The next section contains some relevant notation and definitions. 
Section~\ref{sec:approx-hard} deals with approximability and inapproximability results, 
Section~\ref{sec:poly} discusses some polynomially solvable cases, and Section~\ref{sec:fixed} 
addresses the parameterized complexity. 
We conclude the paper in Section~\ref{sec:conclusion} with some open problems.

\section{Definitions and Preliminaries}\label{sec:defs}

We consider only finite, simple, and undirected graphs.
For a graph $G$, the vertex set is denoted $V(G)$ and the edge set is denoted $E(G)$. 
An edge $xy$ of a graph $G$ is a {\it bridge} if $G-xy$ has more components 
than $G$.
If $G$ does not contain an induced subgraph isomorphic to another graph $F$, 
then $G$ is {\it $F$-free}. 

For $\ell \ge 1$, let $P_\ell$ denote a chordless path with $\ell$ vertices and $\ell-1$ 
edges, and for $\ell \ge 3$, let $C_\ell$ denote a chordless cycle with $\ell$ vertices 
and $\ell$ edges. A {\it triangle} is a $C_3$. 
For $p, q\ge 1$, $K_{p,q}$ denotes the complete bipartite graph with $p$ vertices of one 
color class and $q$ vertices of the second color class; a {\it star} is a $K_{1,q}$. 
A complete graph with $p$ vertices is denoted by $K_p$; $K_p-e$ is the graph obtained from 
$K_p$ by deleting one edge.
An {\it $r$-regular graph} is one in which each vertex has degree exactly $r$.
A forest in which each component is a path is a {\it linear forest}. 

The {\it line graph} $L(G)$ of a graph $G$ has vertex set $E(G)$, and 
two vertices in $L(G)$ are adjacent if the corresponding edges in $G$ are  
incident. By definition, every matching in $G$ corresponds to an independent set in $L(G)$ 
of the same size, and vice versa. One of the main tools we use in discussing rainbow matchings 
is the following concept that generalizes line graphs naturally: 

\begin{definition}
The {\em color-line graph} $C\!L(G)$ of an edge-colored graph $G$ has vertex set $E(G)$, and 
two vertices in $C\!L(G)$ are adjacent if the corresponding edges in $G$ are 
incident or have the same color.
\end{definition}

Notice that, given an edge-colored graph $G$, $C\!L(G)$ can be constructed in time 
$O(|E(G)|^2)$ in an obvious way. We will make use of further facts about color-line graphs 
below that can be verified by definition.

\begin{lemma}\label{lemma:CL}
Let $G$ be an edge-colored graph. Then
\begin{itemize}
 \item[(i)]  $C\!L(G)$ is $K_{1,4}$-free.
 \item[(ii)] $C\!L(G)$ is $(K_7-e)$-free, provided $G$ is properly edge-colored.
 \item[(iii)] Every rainbow matching in $G$ corresponds to an independent set in $C\!L(G)$ 
    of the same size, and vice versa.
\end{itemize} 
\end{lemma}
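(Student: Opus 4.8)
The plan is to dispatch the three items in turn; (iii) and (i) are short, and the work is in (ii).

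Item (iii) is immediate from the definitions: a set $M\subseteq E(G)$ is a rainbow matching exactly when its edges are pairwise non-incident and pairwise differently colored, i.e.\ exactly when the corresponding set of vertices of $C\!L(G)$ (recall $V(C\!L(G))=E(G)$) is independent; since the map $E(G)\to V(C\!L(G))$ is the identity bijection, the sizes agree. For (i), suppose some $xy\in E(G)$, viewed as a vertex of $C\!L(G)$, has four pairwise non-adjacent neighbors $f_1,f_2,f_3,f_4$. They are pairwise non-incident and pairwise differently colored, so at most one of them has color $\phi(xy)$; hence at least three of them are incident to $xy$ and therefore each contains $x$ or $y$. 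Two of those three then share an endpoint of $xy$, making them incident and hence adjacent in $C\!L(G)$ -- a contradiction. So $C\!L(G)$ contains no $K_{1,4}$.

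For (ii) I would argue by contradiction: assume $\phi$ is proper and $C\!L(G)$ contains an induced $K_7-e$ on edges $e_1,\dots,e_7$ of $G$, with $e_6,e_7$ the unique non-adjacent pair. The engine is a classification of the cliques of $C\!L(G)$. Since $\phi$ is proper, two edges of $G$ are adjacent in $C\!L(G)$ precisely when they are incident, \emph{or} monochromatic, but not both; in particular two disjoint edges lying in a common clique must be monochromatic, and a third edge of that same color would force the whole clique to be a monochromatic matching. Combining this with the elementary fact that an intersecting family of edges of a simple graph is a star or a triangle, one obtains: every clique of $C\!L(G)$ of size at least $4$ is either (a) a monochromatic matching, (b) a star, or (c) contained in the six edges of some $K_4$ -- and if its size is exactly $6$ then in case (c) it is \emph{all} six edges of that $K_4$, whose coloring is then forced to be the proper $3$-edge-coloring given by the three perfect matchings. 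Now apply this to the $6$-clique $\{e_1,\dots,e_6\}$, using that $e_7$ is adjacent to $e_1,\dots,e_5$ but not to $e_6$. In case (a), say all of color $c$: if $\phi(e_7)=c$ then $e_7$ would be adjacent to $e_6$, so $\phi(e_7)\neq c$, forcing $e_7$ to be incident to the five pairwise disjoint edges $e_1,\dots,e_5$ -- impossible, as an edge has only two endpoints. In case (b), with apex $w$: non-adjacency of $e_7$ to $e_6$ forces $w\notin e_7$, after which $e_7$ must meet at least four of the five distinct leaves of $e_1,\dots,e_5$ (its color can match at most one of their five distinct colors) -- again impossible. In case (c), $\{e_1,\dots,e_6\}$ is the edge set of a $K_4$ on some $\{a,b,c,d\}$ with the proper $3$-edge-coloring; since $K_4$ is edge-transitive we may take $e_6=bc$, and then $e_7\not\sim bc$ gives $b,c\notin e_7$ and $\phi(e_7)\neq\phi(bc)$, while $e_7\sim ab,cd,ac,bd,ad$ together with properness of $\phi$ forces $e_7$ to contain $a$ or $d$; either this makes $e_7$ equal to the edge $ad$ (already among $e_1,\dots,e_5$) or it produces two incident edges of the same color at $a$ or at $d$ -- a contradiction. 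Hence $C\!L(G)$ is $(K_7-e)$-free.

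I expect case (c) to be the main obstacle. The clique classification and the disposal of cases (a) and (b) are routine, but the $K_4$ case is exactly where the properness hypothesis is indispensable: the color-line graph of a properly $3$-edge-colored $K_4$ is a $K_6$, so the stronger statement ``$C\!L(G)$ is $K_7$-free'' is false, and one has to exploit the \emph{absence} of the edge $e_6e_7$ to finish. A smaller point that needs care is the supporting claim that a non-monochromatic clique containing two disjoint edges is confined to a single $K_4$, which is what rules out larger ``mixed'' cliques in the classification.
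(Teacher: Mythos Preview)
Your proof is correct. The paper does not actually give a proof of this lemma: it merely says the statements ``can be verified by definition'' and moves on. So there is no paper proof to compare against; your argument supplies the details the paper omits.

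A couple of minor remarks. Your clique classification is the right engine for (ii), and the case analysis is sound; the derivation in case~(c) that $e_7$ must contain $a$ or $d$ comes immediately from $e_7\sim ad$ together with $\phi(e_7)\neq\phi(bc)=\phi(ad)$, and the subsequent properness violation is exactly as you describe. In case~(b) you might state explicitly that the leaves $\ell_1,\dots,\ell_5$ are distinct because the $e_i$ are distinct edges through $w$, but this is clear. The supporting claim you flagged --- that a non-monochromatic clique containing two disjoint edges lies inside a $K_4$ --- follows cleanly from your observation that any third edge of a different color must be incident to both disjoint edges, hence has one endpoint in each.
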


Lemma~\ref{lemma:CL} allows us to use results about independent 
sets to obtain results on rainbow matchings. 
This way, we will relate \mrbm\ to the following two problems, 
which are very well studied in the literature.

\medskip
\noindent
\textsc{MIS (Maximum Independent Set)}\\[1ex] 
\begin{tabular}{lp{0.8\textwidth}}
Instance: & A graph $G$.\\
Output: & A maximum independent set in $G$.
\end{tabular}

\medskip
\noindent
\textsc{$3$-MIS (Maximum Independent Set in $3$-regular Graphs)}\\[1ex] 
\begin{tabular}{lp{0.8\textwidth}}
Instance: & A $3$-regular graph $G$.\\
Output: & A maximum independent set in $G$.
\end{tabular}

\medskip
\noindent
In the present paper, a polynomial-time algorithm $\mathrm{A}$ with approximation ratio 
$\alpha$, $0<\alpha<1$, for a (maximization) problem is one that, for all problem instances 
$I$, $\mathrm{A}(I)\ge \alpha\cdot\mathrm{opt}(I)$, where $\mathrm{A}(I)$ is the objective
value of the solution found by $\mathrm{A}$ and $\mathrm{opt}(I)$ 
is the objective value of an optimal solution. 
A problem is said to be in APX (for approximable) if it admits an algorithm with a 
constant approximation ratio. A problem in APX is called APX-complete if all other problems 
in APX can be $L$-reduced (cf.~\cite{PY}) to it. 
It is known that \textsc{$3$-MIS} is APX-complete (see~\cite{AK}). Thus, if \textsc{$3$-MIS} 
is $L$-reducible to a problem in APX, then this problem is also APX-complete. 
All reductions in this paper are $L$-reductions.

\section{Approximability and Hardness}\label{sec:approx-hard}

We first show that \mrbm\ is in APX, by reducing to {\sc MIS} on $K_{1,4}$-free graphs. 
The following theorem is due to Hurkens and Schrijver~\cite{HS}; see also~\cite{Hal}.

\begin{theorem}[\cite{HS}]\label{thm:HS}
For every $\eps>0$ and $p\ge 3$, {\sc MIS} for $K_{1,p+1}$-free graphs can be approximated by 
a polynomial algorithm with approximation ratio $2/p-\eps$.
\end{theorem}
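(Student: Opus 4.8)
The plan is to prove Theorem~\ref{thm:HS} by a bounded-size local search, which is the method of Hurkens and Schrijver. Fix $\eps>0$ and $p\ge 3$, and let $t=t(\eps,p)$ be a constant to be pinned down at the end. Given a $K_{1,p+1}$-free graph $G$, the algorithm maintains an independent set $I$, initialized to any maximal one. As long as there are pairwise non-adjacent vertices $v_1,\dots,v_s\in V(G)\setminus I$ with $s\le t$ for which $I':=\bigl(I\setminus\bigcup_{i=1}^{s}N(v_i)\bigr)\cup\{v_1,\dots,v_s\}$ is again independent and $|I'|>|I|$, replace $I$ by $I'$; call this a $t$-improvement. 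When no $t$-improvement exists, output $I$. Each step strictly increases $|I|\le|V(G)|$, and each step is found, or shown not to exist, by scanning the $O(|V(G)|^{t})$ candidate tuples, so the algorithm is polynomial for fixed $t$.

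It remains to bound $|I|/\mathrm{opt}(G)$ from below. Let $I^{*}$ be a maximum independent set. Deleting $I\cap I^{*}$ from both sets preserves the local optimality of $I$ and the optimality of $I^{*}$ in the induced subgraph and can only hurt the ratio, so we may assume $I\cap I^{*}=\emptyset$. Form the bipartite conflict graph $B$ on parts $I$ and $I^{*}$ using the edges of $G$ between the parts. Every $u\in I^{*}$ has a neighbor in $I$, since otherwise $I\cup\{u\}$ is a $1$-improvement, contradicting maximality of $I$; hence no vertex of $I^{*}$ is isolated in $B$. Moreover, the neighbors of any $w\in I$ inside $I^{*}$ are pairwise non-adjacent (they lie in $I^{*}$), so if there were $p+1$ of them they would, together with $w$, form an induced $K_{1,p+1}$; thus $\deg_{B}(w)\le p$ for all $w\in I$. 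A naive edge count already gives $|I^{*}|\le p|I|$, i.e.\ ratio $1/p$; the point of the local search is to improve this to essentially $2/p$.

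The heart of the argument is the combinatorial lemma of Hurkens and Schrijver: since $I$ admits no $t$-improvement, one has $|I^{*}|\le \tfrac{p}{2}|I|+O(|I|/t)$, hence $|I|\ge(2/p-\eps)|I^{*}|$ once $t$ is chosen large enough in terms of $p$ and $\eps$. The idea is that if $|I^{*}|$ were larger, a defect (Hall-type) expansion argument inside $B$ would yield a set $X\subseteq I$ whose neighborhood in $I^{*}$ strictly exceeds $|X|$; choosing an inclusion-minimal expanding configuration and using that the relevant vertices of $I^{*}$ are pairwise non-adjacent and the $B$-degrees on the $I$-side are at most $p$, one extracts from it a non-adjacent tuple $v_1,\dots,v_s$ with $s\le t$ that witnesses a $t$-improvement — a contradiction. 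I expect this expansion-to-improvement step to be the main obstacle: it requires the quantitative analysis showing both that a local optimum cannot be off by more than the factor $p/2$ and that the witnessing swap can always be taken of size bounded by a function of $p$ and $\eps$ alone. For the details of this analysis we refer to~\cite{HS}.
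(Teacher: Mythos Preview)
The paper does not prove Theorem~\ref{thm:HS}; it is quoted from~\cite{HS} and used as a black box in the one-line proof of Theorem~\ref{thm:apx}. So there is no ``paper's own proof'' to compare against.

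Your sketch is a faithful outline of the Hurkens--Schrijver local-search argument: the bounded-size improvement steps, the polynomial running time, the bipartite conflict graph between the algorithm's output $I$ and an optimum $I^{*}$, and the degree bound $\deg_B(w)\le p$ on the $I$-side coming from induced-$K_{1,p+1}$-freeness are all correct. The reduction to $I\cap I^{*}=\emptyset$ is fine for the ratio, though the claim that it ``preserves the local optimality of $I$ in the induced subgraph'' is not quite what you want or need; the standard argument simply works with $I\setminus I^{*}$ versus $I^{*}\setminus I$ inside the original graph, and adds the common part back at the end.

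The genuine gap is that you explicitly defer the only nontrivial step --- the quantitative bound showing that a $t$-locally optimal $I$ satisfies $|I^{*}|\le\bigl(\tfrac{p}{2}+o_t(1)\bigr)|I|$ --- back to~\cite{HS}. Your ``expansion-to-improvement'' paragraph gestures at the mechanism but does not carry it out; in particular, the precise inequality (for $t=2r$ one gets $|I^{*}|/|I|\le \tfrac{p(p-1)^{r}-p}{2(p-1)^{r}-p}$, which tends to $p/2$ as $r\to\infty$) and the inductive/averaging argument that produces a bounded-size improving swap are absent. As written, your proposal is therefore a more informative citation rather than a self-contained proof; if the intent is merely to invoke the result, as the paper does, the sketch is already more than is required.
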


\begin{theorem}\label{thm:apx}
For every $\eps>0$, \mrbm\ can be approximated by a polynomial algorithm with approximation 
ratio $2/3-\eps$. 
\end{theorem}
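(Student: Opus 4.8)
The plan is to reduce \mrbm\ to {\sc MIS} on $K_{1,4}$-free graphs by passing to the color-line graph, and then invoke Theorem~\ref{thm:HS} with $p=3$.

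First, given an edge-colored graph $G$, I would construct its color-line graph $C\!L(G)$; as observed right after the definition, this can be done in time $O(|E(G)|^2)$, hence in polynomial time in the size of $G$. By Lemma~\ref{lemma:CL}(i), $C\!L(G)$ is $K_{1,4}$-free, that is, $K_{1,p+1}$-free with $p=3$.

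Next, I would run the polynomial-time algorithm of Theorem~\ref{thm:HS} (taking $p=3$) on $C\!L(G)$ to obtain an independent set $I$ of $C\!L(G)$ with $|I|\ge (2/3-\eps)\cdot\alpha(C\!L(G))$, where $\alpha(\cdot)$ denotes the independence number. By Lemma~\ref{lemma:CL}(iii), $I$ corresponds to a rainbow matching $M$ of $G$ with $|M|=|I|$, and, by the same lemma, the size of a largest rainbow matching in $G$ equals $\alpha(C\!L(G))$. Hence $|M|\ge(2/3-\eps)\cdot\mathrm{opt}(G)$, where $\mathrm{opt}(G)$ is the size of a largest rainbow matching in $G$. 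Composing the polynomial-time construction of $C\!L(G)$ with the polynomial-time algorithm of Theorem~\ref{thm:HS} and the back-translation of $I$ to $M$ yields a polynomial-time $(2/3-\eps)$-approximation for \mrbm.

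Since every step is a direct application of a cited result together with Lemma~\ref{lemma:CL}, there is no genuine obstacle here; the only point needing a (routine) check is that the correspondence in Lemma~\ref{lemma:CL}(iii) is size-preserving in both directions, so that the approximation guarantee obtained on $C\!L(G)$ transfers verbatim to the rainbow matching problem on $G$.
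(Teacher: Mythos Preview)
Your proposal is correct and follows exactly the paper's approach: the paper's proof is the one-liner ``This follows from Lemma~\ref{lemma:CL} and Theorem~\ref{thm:HS} with $p=3$,'' and you have simply spelled out the routine details of this reduction.
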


\begin{proof}
This follows from Lemma~\ref{lemma:CL} and Theorem~\ref{thm:HS} with $p=3$.
\end{proof}

On the other hand, we show that \mrbm\ is APX-complete, and thus \rbm\ is NP-complete, even 
when restricted to very simple graph classes, and we give some inapproximability bounds 
for \mrbm. 
We will reduce \mrbm\ on these graph classes to {\sc $3$-MIS}, and use the following 
theorem by Berman and Karpinski~\cite{BK}, where the second part of the statement does not 
appear in the original statement, but follows directly from their proof.

\begin{theorem}[\cite{BK}]\label{thm:BK}
For any $\eps\in (0,1/2)$, it is NP-hard to decide whether an instance
of {\sc $3$-MIS} with $284n$ nodes has the
maximum size of an independent set above $(140-\eps)n$ or below $(139+\eps)n$. 
The statement remains true if we restrict ourselves to the class of bridgeless 
triangle-free $3$-regular graphs.
\end{theorem}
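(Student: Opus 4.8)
The first sentence of Theorem~\ref{thm:BK} is nothing more than a restatement of Berman and Karpinski's inapproximability result for the maximum independent set problem (equivalently, the minimum vertex cover problem via $\alpha(G)=|V(G)|-\tau(G)$) in cubic graphs, so the only point that genuinely needs an argument is the claim that their hard instances may be taken to be bridgeless and triangle-free. The plan is therefore not to reprove the gap from scratch, but to open up the construction in~\cite{BK} and check that these two structural restrictions cost nothing: neither in the number $284n$ of vertices, nor in the thresholds $(140-\eps)n$ and $(139+\eps)n$.

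Recall the shape of that construction. One starts from a bounded-occurrence version of a hard constraint satisfaction problem (a linear system over $\mathrm{GF}(2)$ with three variables per equation), replaces each variable by an \emph{amplifier} gadget and each equation by a gadget of fixed constant size, and glues the pieces together by a matching between ``contact'' vertices so that the result is $3$-regular and the size of a maximum independent set tracks the number of satisfiable equations up to a fixed affine function of the instance size. For triangle-freeness I would argue as follows: the equation gadgets have fixed size and can be inspected directly (Berman and Karpinski's choice is already triangle-free, and any offending edge could in any case be subdivided in an independent-set-neutral fashion); the amplifiers may be assumed to have girth larger than $3$, since amplifier graphs with the required expansion parameters and girth $>3$ exist; and the only edges introduced by the gluing form a matching joining contact vertices of \emph{distinct} gadgets, so no triangle can arise across the seams either.

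For bridgelessness, i.e.\ $2$-edge-connectivity, the observation is that the amplifier gadgets are by design internally well connected --- between any two of their contact vertices there are at least two edge-disjoint paths --- and each equation gadget is attached to the rest of the graph by at least two edges. Hence no single edge separates the graph and the instances are already bridgeless. Should a stray bridge nevertheless occur, one can remove it by the standard cubic-graph device of choosing one edge on each side, deleting the bridge together with the two chosen edges, and reconnecting the four freed half-edges through a short triangle-free gadget; performed symmetrically, this keeps the graph $3$-regular and alters $|V(G)|$ and $\alpha(G)$ by matching constants, so the numerical gap is preserved. Either way one concludes by simply rereading the numbers $284n$, $(140-\eps)n$ and $(139+\eps)n$ off the (unchanged) reduction.

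The main obstacle --- and the reason the argument must be a careful inspection rather than a black-box reduction --- is precisely that the constants in the statement are rigid: any modification of the graph must add equal amounts to the optimum and to the vertex count (or nothing at all) while simultaneously preserving $3$-regularity, girth $>3$, and $2$-edge-connectivity. Tracking all of these invariants through the several layers of the reduction in~\cite{BK} (CSP hardness, amplifier composition, degree control) is the delicate part. The cleanest outcome, and the one I would aim for, is to show that no modification is needed at all, because the gadgets and amplifiers used by Berman and Karpinski already satisfy both restrictions.
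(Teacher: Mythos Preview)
The paper does not give its own proof of this theorem: it is quoted from \cite{BK}, with the single parenthetical remark that ``the second part of the statement does not appear in the original statement, but follows directly from their proof.'' Your plan --- open up the Berman--Karpinski reduction and verify that the amplifier and equation gadgets, together with the gluing matching, already produce triangle-free bridgeless $3$-regular graphs without altering the constants $284n$, $(140-\eps)n$, $(139+\eps)n$ --- is precisely the substantiation that remark invites, and is in fact more detailed than anything the paper offers. There is therefore nothing to compare against; your approach is consistent with (and more explicit than) the paper's one-line justification.
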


\begin{theorem}\label{thm:2fac}
\mrbm\ is APX-complete, even when restricted to properly edge-colored $2$-regular graphs in 
which every color is used exactly twice. Unless P=NP, no polynomial algorithm can guarantee 
an approximation ratio greater than $\frac{139}{140}$.
\end{theorem}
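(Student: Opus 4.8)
The plan is to get the membership in APX for free and put all the work into a hardness reduction. Since the instances in question are edge-colored graphs, Theorem~\ref{thm:apx} already supplies a polynomial $(2/3-\eps)$-approximation, so \mrbm\ restricted to this class lies in APX. Everything else will be an $L$-reduction from \textsc{$3$-MIS} restricted to bridgeless triangle-free $3$-regular graphs, which is APX-hard; I will in fact keep enough control to transport the quantitative gap of Theorem~\ref{thm:BK}.

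The key observation I would exploit is that the color-line graph of a properly edge-colored $2$-regular graph $H$ in which every color is used exactly twice is $3$-regular: each edge of $H$ has two neighbours on its cycle plus exactly one color-partner, and properness keeps the color-partner off the list of cycle-neighbours. Conversely, realizing a given $3$-regular graph $G$ as $C\!L(H)$ amounts to splitting $E(G)$ into a perfect matching $M$ — which becomes the set of color-pairs — and the complementary $2$-factor $F=G-M$, whose cycles become the cycles of $H$. Since $G$ is bridgeless and cubic, a classical theorem of Petersen provides such an $M$, and $F$ is then a disjoint union of cycles, each of length at least $3$ because $G$ is simple (in fact at least $4$, as $G$ is triangle-free, which is more than enough below). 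Concretely, for each cycle $Z=v_1v_2\cdots v_\ell v_1$ of $F$ I take a fresh cycle $C_Z=w_1w_2\cdots w_\ell w_1$ in $H$, regard its edge $w_jw_{j+1}$ as representing the vertex $v_j\in V(G)$, and colour that edge by the unique $M$-edge incident with $v_j$.

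I would then verify the routine facts: $H$ is a simple $2$-regular graph; each colour is used exactly twice (once per endpoint of its $M$-edge); and the colouring is proper, because consecutive edges of a cycle $C_Z$ represent $F$-adjacent — hence not $M$-adjacent — vertices of $G$, so the two $M$-edges defining their colours differ. Identifying $E(H)$ with $V(G)$ through this representation, two edges of $H$ are incident exactly when the corresponding vertices are $F$-adjacent, and carry the same colour exactly when the corresponding vertices are $M$-adjacent; since $M$ and $F$ partition $E(G)$, this yields $C\!L(H)=G$ on the nose. By Lemma~\ref{lemma:CL}(iii) a maximum rainbow matching of $H$ has size $\alpha(G)$, and any rainbow matching of $H$ converts back to an independent set of $G$ of the same size; hence $G\mapsto H$, together with this back-translation, is an $L$-reduction with both constants equal to $1$, and \mrbm\ on the stated class is APX-complete.

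For the sharp threshold I would simply transport Theorem~\ref{thm:BK}: the graph $H$ built from a $284n$-vertex instance has exactly $284n$ edges and $\mathrm{opt}(H)=\alpha(G)$, so for every $\eps\in(0,1/2)$ it is NP-hard to decide whether its largest rainbow matching exceeds $(140-\eps)n$ or is at most $(139+\eps)n$; a polynomial algorithm with approximation ratio $\rho>139/140$ would, for $\eps$ small enough that $\rho>\frac{139+\eps}{140-\eps}$, separate these two cases and force P$=$NP. The only genuine idea is the perfect-matching/$2$-factor dictionary that makes $C\!L(H)=G$; after that the sole point needing care is that the cycles of $F$ are long enough for $H$ to stay a simple graph, which the hypotheses on $G$ guarantee comfortably.
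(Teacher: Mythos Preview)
Your proof is correct and follows essentially the same route as the paper's: decompose the bridgeless cubic graph $G$ via Petersen into a perfect matching $M$ plus a $2$-factor $F$, take $H$ to be the $2$-regular graph whose line graph is $F$, and colour its edges by the $M$-edges so that $C\!L(H)=G$; the gap from Theorem~\ref{thm:BK} then transfers verbatim. The only difference is cosmetic---you spell out the edge-by-edge construction of $H$ and the properness check, whereas the paper compresses all of this into the single observation that $G-M$, being $2$-regular, is already the line graph of some $2$-regular $H$.
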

\begin{proof}
 Let $G$ be a bridgeless triangle-free $3$-regular graph on $284n$ vertices. 
By a classical theorem of Petersen~\cite{Pet}, $G$ contains a perfect matching $M$. 
Then, $G-M$ is triangle-free and $2$-regular, and thus the line graph of a 
triangle-free and $2$-regular graph $H$ on $284n$ vertices. Now it is easy to color 
the edges of $H$ in such a way that every color is used exactly twice, and $G=C\!L(H)$. 
From Theorem~\ref{thm:BK}, it follows that it is NP-hard to decide if the maximal 
size of a rainbow matching in $H$ is above $(140-\eps)n$ or below $(139+\eps)n$.
\end{proof}

\begin{corollary}\label{thm:apxcom}
\mrbm\ is APX-complete, even when restricted to complete graphs. Unless P=NP, no polynomial 
algorithm can guarantee an approximation ratio greater than $\frac{139}{140}$.
\end{corollary}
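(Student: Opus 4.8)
The plan is to derive Corollary~\ref{thm:apxcom} from Theorem~\ref{thm:2fac} by embedding the hard instances of \mrbm\ on $2$-regular graphs into complete graphs without changing the size of the largest rainbow matching. Start with a properly edge-colored $2$-regular graph $H$ on $N=284n$ vertices in which every color is used exactly twice, as produced in the proof of Theorem~\ref{thm:2fac}. Form the complete graph $K_N$ on the same vertex set $V(H)$: keep the coloring $\phi$ on the edges of $H$, and assign to every edge of $K_N$ not in $H$ a fresh private color, distinct from all others (so that each such edge forms its own color class of size one).

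The key observation is that this recoloring does not help: a rainbow matching in the new edge-colored $K_N$ can use at most $\lfloor N/2\rfloor$ edges simply because it is a matching, and the edges outside $H$ are useless for increasing the rainbow matching size in the regime that matters, since the bound we are chasing, $(140-\eps)n$ versus $(139+\eps)n$, is far below $N/2=142n$. More precisely, one argues that there is an optimal rainbow matching using only edges of $H$: any edge $e\notin E(H)$ in a rainbow matching covers two vertices, each of which is incident to two edges of $H$; a short exchange argument, or simply the fact that the maximum rainbow matching of $H$ plus the "extra" edges is still bounded, shows the optimum of the $K_N$ instance equals the optimum of the $H$ instance whenever the latter is at least, say, $139n$. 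Thus deciding whether the maximum rainbow matching in this complete graph exceeds $(140-\eps)n$ or is below $(139+\eps)n$ is exactly as hard as the corresponding question for $H$, which is NP-hard by Theorem~\ref{thm:2fac}, and the same inapproximability ratio $\frac{139}{140}$ transfers. Since this is a (trivial) $L$-reduction from an APX-complete problem and \mrbm\ lies in APX by Theorem~\ref{thm:apx}, \mrbm\ restricted to complete graphs is APX-complete.

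The one point requiring a little care, and the main (if modest) obstacle, is justifying that the extra fresh-colored edges of $K_N\setminus H$ genuinely cannot be exploited to push the optimum up into the ambiguous window. The cleanest way is to observe that every edge is either in $H$ or has a unique private color, so the color constraint on a matching reduces to the color constraint restricted to edges of $H$; hence any rainbow matching $R$ in $K_N$ decomposes as $R_H\cup R'$ where $R_H\subseteq E(H)$ is rainbow in $H$ and $R'$ consists of vertex-disjoint private-colored edges, and one wants to show $|R_H|+|R'|$ can be matched by a rainbow matching of $H$ alone. Since $|R|\le N/2=142n$ always, and we only ever need to distinguish the values $(140-\eps)n$ and $(139+\eps)n$, there is slack; but to be fully rigorous one gives the swap: replacing a private edge $uv\in R'$ by an $H$-edge at $u$ or at $v$, chosen to avoid color conflicts with $R_H$, is possible because $u$ has two $H$-neighbours and the few colors already used by $R_H$ can block at most a bounded fraction, so one can iterate to absorb all of $R'$ into $H$ without decreasing $|R|$. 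I expect this bookkeeping to be the only nontrivial step; everything else is immediate from Theorem~\ref{thm:2fac} and the definition of $L$-reduction.
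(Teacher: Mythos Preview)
Your reduction is broken, and the trouble is exactly the point you flag as ``the one point requiring a little care'': giving every non-$H$ edge of $K_N$ its own private color destroys the hardness. With that coloring, \emph{any} matching in $K_N$ that avoids $E(H)$ is automatically rainbow, and since the complement of a $2$-regular graph on $N=284n\ge 6$ vertices certainly has a perfect matching, the maximum rainbow matching in your $K_N$ is always exactly $N/2=142n$, irrespective of whether the optimum in $H$ is above $(140-\eps)n$ or below $(139+\eps)n$. So the gap you are trying to preserve collapses entirely.

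Your proposed swap argument cannot repair this: when you try to replace a private edge $uv\in R'$ by an $H$-edge at $u$, the two $H$-neighbours of $u$ may both already be covered by other edges of $R$, so no legal swap exists. Indeed, if the swap argument worked as stated it would convert any perfect matching of $K_N$ into a rainbow matching of $H$ of the same size, proving that $H$ always has a rainbow matching of size $142n$, which is absurd. The paper avoids this pitfall by doing essentially the opposite of what you did: it colors all missing edges with a \emph{single} new color (and adds two extra vertices so that this new color can always contribute one edge disjoint from $H$). Then any rainbow matching in the complete graph uses at most one non-$H$ edge, and the optimum increases by exactly one, preserving the $(140-\eps)n$ versus $(139+\eps)n$ gap up to an additive $+1$.
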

\begin{proof}
 Use the same graph $H$ from the previous proof, add two new vertices, and add all 
missing edges, all colored with the same new color, to get an edge-colored complete 
graph $H'$ on $284n+2$ vertices. Then, it is NP-hard to decide if the maximal size 
of a rainbow matching in $H'$ is above $(140-\eps)n+1$ or below $(139+\eps)n+1$.
\end{proof}

\begin{theorem}\label{thm:apxpath}
\mrbm\ is APX-complete, even when restricted to properly edge-colored paths. 
Unless P=NP, no polynomial algorithm can guarantee an approximation ratio greater than 
$\frac{210}{211}$.
\end{theorem}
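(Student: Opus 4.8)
The plan is to give an $L$-reduction from \textsc{$3$-MIS} on bridgeless triangle-free $3$-regular graphs (Theorem~\ref{thm:BK}) to \mrbm\ on properly edge-colored paths. The starting point is the construction in the proof of Theorem~\ref{thm:2fac}: from such a $3$-regular graph $G$ on $284n$ vertices we already obtain a triangle-free $2$-regular graph $H$, i.e.\ a disjoint union of cycles, properly edge-colored so that each color appears exactly twice and $G = C\!L(H)$. A rainbow matching in $H$ corresponds to an independent set in $G$. The remaining task is to replace the disjoint cycles of $H$ by a single path while controlling how the optimum changes.

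The key step is a gadget that converts each cycle $C_\ell$ of $H$ into a path without changing (or changing in a controlled, additive way) the maximum rainbow matching size, and that links all these paths together into one long path. I would cut each cycle at one edge to turn it into a path $P_\ell$; cutting one edge $e$ of a rainbow matching can cost at most one edge, so over all cycles this loses at most (number of cycles) edges, which must be bounded linearly in $n$ — here triangle-freeness and $3$-regularity of $G$ bound the number of cycles of $H$ by $O(n)$, which is what makes the $L$-reduction go through. To re-link the resulting paths into one path I would introduce, between consecutive path-pieces, a short connector of fresh edges colored with brand-new colors (each used once, or twice within the connector in a way that forces exactly one of them into any maximum rainbow matching), so that in any optimal solution the connectors contribute a fixed, known number of edges and never interact with the original colors. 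Proper edge-coloring is maintained since new colors are used only on the fresh connector edges, which meet the old edges only at path-endpoints where no color repetition is created. One then checks that $\mathrm{opt}(\text{path instance}) = \mathrm{opt}(H) + c$ for an explicitly computable constant $c$ depending linearly on $n$, and that a rainbow matching in the path recovers one in $H$ of size at least its own size minus the same additive terms; this yields the two $L$-reduction inequalities with the appropriate constants.

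The inapproximability constant $\frac{210}{211}$ then comes from tracking the arithmetic: Theorem~\ref{thm:BK} gives a gap between $(140-\eps)n$ and $(139+\eps)n$ for \textsc{$3$-MIS}, hence the same gap for maximum rainbow matching in $H$; adding the fixed contribution of the cutting losses and the connector edges shifts both the "yes" and "no" thresholds by the same constant (proportional to $n$), and choosing the connector gadget so that the total instance corresponds to the claimed scaling turns the ratio $\frac{139+\eps}{140-\eps}$ into $\frac{210+\eps'}{211-\eps'}$, giving the bound $\frac{210}{211}$ in the limit. Concretely one has to count how many cycles $H$ has and how many connector edges are inserted, so that both optimum values become $(\text{something})\cdot n$ with numerator and denominator summing appropriately.

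The main obstacle is the gadget design: the connectors must (a) preserve proper edge-coloring, (b) force a predictable number of their own edges into every maximum rainbow matching regardless of what happens on the rest of the path, and (c) not create any "shortcut" allowing a rainbow matching in the path that does not project back to a comparably large rainbow matching in $H$. Getting property (b) exactly right — so the additive constant $c$ is genuinely a constant and not merely a bound — is the delicate point, and it is also what pins down the precise value $\frac{210}{211}$; a looser gadget would still prove APX-completeness but with a weaker explicit ratio.
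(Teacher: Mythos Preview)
Your high-level plan---open each cycle of $H$ into a path, link the pieces, and track the additive shift in the optimum---is exactly the paper's strategy, and you correctly pinpoint the gadget design as the crux. But the proposal has a genuine gap right there: the gadgets you actually describe do not yield the exact equalities you then assert. Cutting a cycle at an \emph{edge} gives only $\mathrm{opt}(C)-1\le \mathrm{opt}(\text{cut path})\le \mathrm{opt}(C)$; you note this yourself, yet immediately afterward claim $\mathrm{opt}(\text{path instance})=\mathrm{opt}(H)+c$. The paper's device is to cut each cycle at a \emph{vertex} $v$ (so $v$ splits into two endpoints) and append one pendant edge at each endpoint, both colored with the \emph{same} fresh color~$v$. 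Then the maximum rainbow matching grows by exactly one: any rainbow matching on the cycle leaves one side of $v$ free, so one color-$v$ end edge can be added; conversely any rainbow matching on the extended path contains at most one color-$v$ edge, and deleting it leaves a matching using at most one of the two edges formerly incident to $v$, hence a rainbow matching on $C$.

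The linking step has the same defect: connectors with distinct fresh colors do not force a fixed contribution, since each one competes with the two end-edges it touches. The paper instead gives \emph{all} $c-1$ connectors the \emph{same} new color~$1$ and appends a five-edge tail colored $1,2,1,2,1$. Any rainbow matching then contains at most one edge of each of colors $1$ and $2$, so the gain over $L$ is at most $2$, and $+2$ is always achievable entirely inside the tail (take the second and fifth tail edges) without disturbing $L$. With both gadgets the shift is exactly $c+2$, and since the triangle-free $2$-regular graph $H$ has only cycles of length at least $4$ one gets $c\le 284n/4=71n$; plugging this into the Berman--Karpinski gap $(139+\eps)n$ versus $(140-\eps)n$ is what produces $\tfrac{210}{211}$. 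Your sketch would still give APX-completeness, as you say, but the stated inapproximability constant is part of the theorem and does require these exact-shift gadgets.
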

\begin{proof}
 Again, start with the $2$-regular graph $H$ from the proof of Theorem~\ref{thm:2fac}. 
For a cycle $C\subseteq H$, create a path $P_C$ as follows. Cut the cycle at a vertex 
$v$ to get a path of the same length as $C$, with the two end vertices corresponding to the 
original vertex $v$. Now add an extra edge to each of the two ends, and color this 
edge with the color $v$---a color not used anywhere else in $H$. The maximum rainbow 
matching in this new graph $H'$ is exactly one greater than the maximum rainbow 
matching in $H$. To see this, take a maximum rainbow matching in $H$, and notice that 
it can contain at most one edge incident to $v$. Thus, in $H'$ we can add one of 
the two edges colored $v$ to this matching to get a greater rainbow matching. On the 
other hand, every rainbow matching in $H$ contains at most two edges incident to the 
two copies of $v$, and at most one of them is not in $H$. Thus, deleting one edge 
from a rainbow matching in $H'$ yields a rainbow matching in $H$.

Now repeat this process for every cycle in $H$ to get a linear forest $L$ with $c$ 
components, say, and $284n+2c$ edges. Similarly to above, it is NP-hard to decide 
if the maximal size of a rainbow matching in $L$ is above $(140-\eps)n+c$ or below 
$(139+\eps)n+c$. We now connect all paths in $L$ with $c-1$ extra edges colored with 
a new color $1$ to one long path, and add a path on $5$ edges colored $1,2,1,2,1$ (where $2$ is a new color) to 
one end to get a path $P$ on $284n+3c+4$ edges. The size of a maximum rainbow matching 
in $P$ is exactly $2$ larger than in $L$, so it is NP-hard to decide if the maximal 
size of a rainbow matching in $P$ is above $(140-\eps)n+c+2$ or below $(139+\eps)n+c+2$. 
As $H$ does not contain any triangles, we have $c\le 71n$, and the theorem follows.
\end{proof}

\begin{theorem}\label{thm:apxlf}
\mrbm\ is APX-complete, even when restricted to properly edge-colored $P_5$-free linear 
forests in which every color is used at most twice. Unless P=NP, no polynomial algorithm 
can guarantee an approximation ratio greater than $\frac{423}{424}$.
\end{theorem}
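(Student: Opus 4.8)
The plan is to reduce from the class of properly edge-colored $2$-regular graphs produced in the proof of Theorem~\ref{thm:2fac} (equivalently, from \textsc{$3$-MIS} on bridgeless triangle-free cubic graphs via Theorem~\ref{thm:BK}). So let $H$ be such a graph: a disjoint union of cycles, properly colored with every color used exactly twice, and with $G := C\!L(H)$ a cubic graph on $284n$ vertices. Throughout I write $\mathrm{rbm}(\cdot)$ for the size of a largest rainbow matching, so that $\mathrm{rbm}(H)$ equals the independence number of $G$ by Lemma~\ref{lemma:CL}(iii). Fix an orientation of each cycle of $H$, so every edge $e$ of $H$ acquires a successor $e^{+}$.

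From $H$ I would build a properly edge-colored $P_5$-free linear forest $L$ as follows. For every edge $e$ of $H$ create a four-vertex path component whose three edges are, in order, a \emph{backward connector} $b_e$, a \emph{central edge} $\hat e$, and a \emph{forward connector} $f_e$, and give $\hat e$ the original color $\phi(e)$. For every edge $e$ introduce a fresh color $\gamma_e$ (all fresh colors distinct from one another and from the colors used in $H$) and give both $f_e$ and $b_{e^{+}}$ the color $\gamma_e$. Every component of $L$ is a $P_4$, so $L$ is a $P_5$-free linear forest; each old color is used exactly twice (on $\hat e$ and on $\widehat{m(e)}$, where $m(e)$ is the other edge of $H$ of the same color), each fresh color is used exactly twice, and adjacent edges of $L$ always receive distinct colors, so $L$ is properly edge-colored with every color used (exactly) twice.

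The heart of the matter is the identity $\mathrm{rbm}(L) = |E(H)| + \mathrm{rbm}(H)$. Since $\mathrm{rbm}(L)$ is the independence number of $C\!L(L)$, I would first describe this graph: each cycle of length $\ell$ in $H$ contributes a $3\ell$-cycle $\hat e_1, f_{e_1}, b_{e_2}, \hat e_2, f_{e_2}, b_{e_3}, \dots$ of $C\!L(L)$ on which the central vertices form an independent set, and in addition $C\!L(L)$ carries the perfect matching $M$ formed by the color-mate pairs of central vertices; in particular $C\!L(L)$ induced on the central vertices is exactly $M$. Now write an independent set $I$ of $C\!L(L)$ as $I_{\hat e} \sqcup I_c$, the central and connector parts. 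For a fixed $I_{\hat e}$, each connector pair $\{f_e, b_{e^{+}}\}$ can put one vertex into $I_c$ unless both $\hat e$ and $\widehat{e^{+}}$ lie in $I_{\hat e}$, in which case it must put in none; hence the best choice of $I_c$ has size $|E(H)| - e_s(I_{\hat e})$, where $e_s(I_{\hat e})$ counts the edges $e$ of $H$ with $\hat e, \widehat{e^{+}} \in I_{\hat e}$. Therefore the independence number of $C\!L(L)$ equals $|E(H)| + \max_{I_{\hat e}} \bigl( |I_{\hat e}| - e_s(I_{\hat e}) \bigr)$, the maximum over independent sets of $M$. Finally, because $E(G)$ is the disjoint union of $M$ and the successor edges $\{\,e\,e^{+}\,\}$, an elementary argument shows $\max\{\, |J| - e_s(J) : J \text{ independent in } M \,\}$ equals the independence number of $G$ (for "$\ge$" take a maximum independent set of $G$; for "$\le$" greedily delete one endpoint per violated successor edge from any $J$). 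Combining the two displays gives $\mathrm{rbm}(L) = |E(H)| + \mathrm{rbm}(H)$.

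It then remains to read off the consequences. The construction is polynomial and $L$ has $3|E(H)| = 852n$ edges; since the instances of Theorem~\ref{thm:BK} have $\mathrm{rbm}(H) = \Theta(n)$, both conditions of an $L$-reduction are routine to check (from any rainbow matching $R$ of $L$ one recovers, by the same greedy deletion, an independent set of $G$ of size at least $|R| - |E(H)|$). Together with Theorem~\ref{thm:apx} this gives APX-completeness. Quantitatively $\mathrm{rbm}(L) = 284n + (\text{independence number of } G)$, so by Theorem~\ref{thm:BK} it is NP-hard to decide whether $\mathrm{rbm}(L) \ge (424 - \eps)n$ or $\mathrm{rbm}(L) \le (423 + \eps)n$, and a polynomial algorithm with approximation ratio greater than $\tfrac{423}{424}$ would decide this; hence no such algorithm exists unless P $=$ NP. The step I expect to be the main obstacle is the third paragraph — getting the "softened conflict" bookkeeping exactly right, i.e.\ checking that each connector pair behaves as a penalty-one gadget and that mixing the hard color-mate constraints with the soft successor constraints still computes the independence number of $G$ on the nose; everything else is routine.
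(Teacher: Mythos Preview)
Your proposal is correct and follows essentially the same route as the paper: the $P_4$-gadget is identical (your fresh color $\gamma_e$ is precisely the paper's vertex-name color $\mathrm{head}(e)$ under the natural bijection available for a $2$-regular oriented $H$), and the key identity $\mathrm{rbm}(L)=|E(H)|+\mathrm{rbm}(H)$ together with the resulting $\tfrac{423}{424}$ bound match the paper verbatim. The only difference is presentational: you establish the identity by analyzing independent sets in $C\!L(L)$ and the penalty term $e_s(I_{\hat e})$, whereas the paper argues directly on rainbow matchings, but the underlying bookkeeping (your $e_s$ is exactly the paper's count of vertices $v$ with both incident $H$-edges selected) is the same.
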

\begin{proof}
 We again start with the $2$-regular graph $H$ from the proof of Theorem~\ref{thm:2fac}. 
Now construct a linear forest $L$ consisting of $284n=|E(H)|$ paths of length $3$, 
where every edge in $H$ corresponds to one path component in $L$. For an edge 
$vw\in E(H)$ with color $\phi(vw)$, color the three edges of the corresponding path 
with the colors $v$, $\phi(vw)$ and $w$ in this order. We claim that a maximum rainbow 
matching in $L$ is exactly $284n$ greater than a maximum rainbow matching in $H$. 
Note that this claim implies the theorem. To see the claim, consider first a rainbow 
matching $M$ in $H$. Note that for every vertex $v\in V(H)$, $M$ can contain at most 
one edge incident to $v$, so in $L$ we can add one of the two edges labeled $v$ to 
the matching induced by $M$. This can be done for every vertex in $V(H)$, so the 
largest rainbow matching in $L$ is at least $284n$ larger than $M$. On the other 
hand, every rainbow matching $M'$ in $L$ contains at most two edges either colored 
$v$ or incident to an edge colored $v$, and at most one of them is colored $v$. 
Thus, by deleting at most $284n$ edges from $M'$ we can create a rainbow matching 
in $H$. 
Therefore, it is NP-hard to decide if the maximal size of a rainbow matching in 
$L$ is above $(140+284-\eps)n$ or below $(139+284+\eps)n$.
\end{proof}

\begin{theorem}\label{thm:apxp4}
\mrbm\ is APX-complete, even when restricted to properly edge-colored $P_4$-free bipartite 
graphs in which every color is used at most twice. Unless P=NP, no polynomial algorithm can 
guarantee an approximation ratio greater than $\frac{423}{424}$.
\end{theorem}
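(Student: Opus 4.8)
The plan is to reuse the construction from the proof of Theorem~\ref{thm:apxlf} almost verbatim, and then ``fold'' each length-$3$ path into a small tree that stays $P_4$-free and bipartite. Recall that in that proof we built, from the $2$-regular graph $H$ on $284n$ vertices, a linear forest $L$ with $284n$ components, one path $v\,\text{-}\,\phi(vw)\,\text{-}\,w$ per edge $vw\in E(H)$, and showed a maximum rainbow matching in $L$ exceeds that of $H$ by exactly $284n$. A path on $3$ edges contains $P_4$, so we cannot keep it; instead, for each edge $vw\in E(H)$ I would take a star (or double star) on the same color multiset. The cleanest choice is a $K_{1,3}$ with a single center: give the center three pendant edges colored $v$, $\phi(vw)$, $w$. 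This graph is $K_{1,3}$, which is $P_4$-free, bipartite (it is a tree), and each color still appears exactly twice across the whole construction (once for each endpoint of each $H$-edge, once for each $H$-edge itself). But a star admits only a single matching edge, which breaks the ``$+1$ per component'' bookkeeping, so I would instead use the double star: an edge $xy$ with a pendant $x$-$x'$ and a pendant $y$-$y'$, colored so that the central edge gets $\phi(vw)$ and the two pendants get $v$ and $w$; this is the $3$-edge path again. So the right object is the \emph{double star} $S$ formed by an edge $xy$ with one extra pendant at $x$ and one extra pendant at $y$ — wait, that is $P_4$.

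To genuinely avoid $P_4$ I would take the $4$-vertex star $K_{1,3}$ and add one more leaf, i.e. $K_{1,4}$ is still $P_4$-free, but I only have three colors per edge of $H$. The correct fix: per edge $vw$ use a star $K_{1,3}$ centered at a new vertex $u_{vw}$, with leaves colored $v$, $\phi(vw)$, $w$. A star has matching number $1$, and a maximum rainbow matching in this star has size $1$ using any of the three colors. I would then argue: given a rainbow matching $M$ in $H$, extend it by picking, in each star $K_{1,3}$ for edge $vw$, the leaf colored with a vertex-color $v$ such that $M$ has no edge at $v$ — since $\deg_H(v)=3$ and $M$ meets $v$ at most once, a counting/greedy argument over all $284n$ stars lets me add one leaf per star while keeping colors distinct: there are $3$ choices per star, colors $v,\phi(vw),w$, and each color is shared with exactly one other star; order the stars and greedily avoid conflicts. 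That gives $|M|+284n$. Conversely, a rainbow matching in $L$ picks one leaf per star; I discard every leaf whose color is a vertex-color (at most $284n$ of them, but actually all of them that are picked) and... this discards too many.

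The honest approach, and what I expect the authors do, is to keep the $3$-edge path intact but make the \emph{whole graph} $P_4$-free by a different host structure: replace each path $v\,\text{-}\,\phi(vw)\,\text{-}\,w$ by the bipartite graph $K_{2,3}$ minus a perfect... no. The real point is that a $P_4$-free bipartite graph is a disjoint union of complete bipartite graphs (a ``bi-clique forest''), and $K_{2,2}=C_4$ is $P_4$-free bipartite with matching number $2$. So per edge $vw$ of $H$ I would use a $C_4$ on vertices $a,b,c,d$ with edges $ab,bc,cd,da$ colored $\phi(vw), v, \phi(vw)', w$ — but that needs a fourth color. With two colors per edge available as ``endpoint colors'' $v,w$ and one ``edge color'' $\phi(vw)$, I would color $C_4$ as $v,\phi(vw),w,\phi(vw)$: then $\phi(vw)$ appears twice \emph{inside one component}, which is fine since a rainbow matching uses each color once; the maximum rainbow matching in this $C_4$ is $\{$edge colored $v$, edge colored $w\}$ of size $2$, or one $\phi(vw)$-edge giving size $1$. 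Color $\phi(vw)$ is used twice total (both inside this $C_4$), colors $v,w$ used once here and once in the adjacent component — so every color is used at most twice globally. Then a maximum rainbow matching in $L'$ equals a maximum rainbow matching in $H$ plus $284n$, by the same argument as Theorem~\ref{thm:apxlf} (extend $M$ by adding, per edge $vw$, the $C_4$-edge colored with an endpoint of $vw$ not met by $M$; conversely restrict). Since $C_4=K_{2,2}$, the graph $L'$ is $P_4$-free and bipartite, every color used at most twice, and the inapproximability constant $\tfrac{423}{424}$ carries over verbatim from Theorem~\ref{thm:apxlf} because $|E(L')|=4\cdot 284n$ matches the $424n$-style bookkeeping.

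The main obstacle is getting the color bookkeeping exactly right so that the reduction is still an $L$-reduction with the claimed ratio: I must verify (a) every color is used at most twice in $L'$, (b) the additive gap is exactly $284n$ — both the ``$\ge$'' direction (a greedy/counting argument that the $284n$ endpoint-colored edges can be added simultaneously without color clashes, using $\deg_H=2$ so each endpoint-color is shared by exactly two $C_4$'s) and the ``$\le$'' direction (from any rainbow matching in $L'$, deleting at most $284n$ edges yields a rainbow matching in $H$), and (c) translate the $(140\pm\eps)n$ vs $(139\pm\eps)n$ thresholds of Theorem~\ref{thm:BK} through the $+284n$ shift to obtain the ratio $\tfrac{423}{424}$. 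Everything else is a routine repackaging of the linear-forest proof.
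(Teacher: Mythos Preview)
Your eventual construction---close each $P_4$ of $L$ into a $C_4$ by adding one more edge and giving it the middle color $\phi(vw)$---is exactly the paper's proof. The paper dispatches it in one line: in the new graph every rainbow matching corresponds to one in $L$ of the same size (a new edge can always be swapped for the opposite middle edge of its $C_4$, which has the same color and the same neighbours inside that component), so everything, including the $\tfrac{423}{424}$ bound, is inherited verbatim from Theorem~\ref{thm:apxlf}.

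There is one slip in your color count. You assert that ``$\phi(vw)$ is used twice total (both inside this $C_4$)'', but recall that in $H$ every color is used on exactly two edges; hence there is another edge $v'w'$ with $\phi(v'w')=\phi(vw)$, and that color sits in two different $C_4$'s, four times altogether. So the clause ``every color is used at most twice'' is \emph{not} met by this construction for the edge-colors of $H$ (only the vertex-colors $v$ are used exactly twice). Since the paper's construction is literally the same, this is a mismatch between the theorem statement and its proof in the paper as well; it does not touch APX-completeness or the $\tfrac{423}{424}$ ratio, which depend only on the $+284n$ shift of the optimum. (Your aside that $|E(L')|=4\cdot 284n$ ``matches the $424n$-style bookkeeping'' is a non sequitur: $424=140+284$ is the shifted optimum, not an edge count, but the ratio you state is nonetheless correct.)
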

\begin{proof}
 Take the linear forest $L$ from the proof of Theorem~\ref{thm:apxlf}, add an edge 
to every $P_4$ to make it a $C_4$, and color the new edge with the same color as 
the middle edge of the $P_4$. This graph $G$ is $P_4$-free, and every rainbow 
matching in $G$ corresponds to a rainbow matching in $L$ of the same size.
\end{proof}

\begin{theorem}\label{thm:apxlf6}
\mrbm\ is APX-complete, even when restricted to properly edge-colored $P_6$-free linear 
forests in which every color is used at most twice. Unless P=NP, no polynomial algorithm 
can guarantee an approximation ratio greater than $\frac{1689}{1694}$.
\end{theorem}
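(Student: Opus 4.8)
The plan is to reduce from \textsc{$3$-MIS}, in the spirit of Theorems~\ref{thm:2fac} and~\ref{thm:apxlf}, but with a gadget that is more economical than the one in Theorem~\ref{thm:apxlf} because a $P_6$-free linear forest is permitted to contain paths on five vertices. Throughout, write $\mu(X)$ for the number of edges of a largest rainbow matching of an edge-colored graph $X$.

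I would start from the construction in the proof of Theorem~\ref{thm:2fac}: from a bridgeless triangle-free cubic graph $G_0$ on $284n$ vertices produce a triangle-free $2$-regular graph $H$ equipped with a proper edge coloring in which each color is used exactly twice and $C\!L(H)=G_0$. By Lemma~\ref{lemma:CL}(iii), $\mu(H)$ is exactly the maximum size of an independent set in $G_0$, so by Theorem~\ref{thm:BK} it is NP-hard to decide whether $\mu(H)>(140-\eps)n$ or $\mu(H)<(139+\eps)n$. It then suffices to transfer this gap, with a controlled additive shift, to a properly edge-colored $P_6$-free linear forest in which every color is used at most twice.

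To build such a forest $L$, orient each cycle of $H$ and group its edges into consecutive pairs; a pair $\{e,e'\}$ shares a ``middle'' vertex and has two ``outer'' endpoints. For each pair I create a component of $L$, a copy of $P_5$ with vertices $p_1,p_2,p_3,p_4,p_5$ in order, and color its four edges, in order, by: a color dedicated to the outer endpoint of $e$; the color $\phi(e)$; the color $\phi(e')$; and a color dedicated to the outer endpoint of $e'$. Then $L$ is a properly edge-colored $P_6$-free linear forest in which every color is used exactly twice --- each original color of $H$ sits on the ``representative'' edges ($p_2p_3$ or $p_3p_4$) of the two $H$-edges carrying it, and each outer-vertex color sits on one edge in each of the two components whose pairs have that vertex as an outer endpoint --- and the incidence of $p_2p_3$ and $p_3p_4$ inside a single component already records the fact that a matching of $H$ uses at most one edge of each pair. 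As in the proof of Theorem~\ref{thm:apxlf} I would then establish $\mu(L)=\mu(H)+t$, where $t$ is the number of colors dedicated to outer endpoints, in two directions: from a rainbow matching of $H$, route its representatives into $L$ and add exactly one outer edge per outer vertex, checking non-incidence case by case; conversely, from a rainbow matching of $L$ read off the set of represented $H$-edges, delete one edge at each $H$-vertex at which this set fails to be a matching, and observe that such a failure forces the corresponding outer-vertex color to go unused, so the shift $t$ exactly pays for the deletions. Together with Theorem~\ref{thm:BK} (applied with its parameter rescaled if necessary so that the shifted thresholds are integers) this gives APX-completeness, and tracking the precise value of $t$ yields the inapproximability bound in the statement, which improves the bound $\frac{423}{424}$ already implied for $P_6$-free linear forests by Theorem~\ref{thm:apxlf}.

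The step I expect to be the main obstacle is the exact accounting for edges of $H$ that do not fall into clean consecutive pairs: a triangle-free $2$-regular graph can contain a linear number of odd cycles, and in each odd cycle the ``consecutive pairs'' scheme leaves one edge, and a seam vertex, over. These leftovers must be absorbed by small auxiliary gadgets --- short paths of the kind used in the proof of Theorem~\ref{thm:apxlf} --- without disturbing the identity $\mu(L)=\mu(H)+t$ or the ``each color at most twice'' condition, and it is exactly the resulting value of $t$ that determines whether the final ratio is the one asserted. A secondary point to handle carefully, in the extraction direction, is to make sure that every $H$-vertex at which a matching conflict can arise owns a dedicated color whose forced absence is charged once and only once, with no double counting between neighbouring pairs.
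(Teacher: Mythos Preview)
Your proposal is correct and follows essentially the same route as the paper: start from the $2$-regular graph $H$ of Theorem~\ref{thm:2fac}, pair consecutive edges of each cycle into $P_5$-components colored (outer vertex, $\phi(e)$, $\phi(e')$, outer vertex), absorb the single leftover edge of each odd cycle by a $P_4$ as in Theorem~\ref{thm:apxlf}, and show that $\mu(L)=\mu(H)+t$ with $t=(284n+o)/2$ the number of split (outer) vertices. Your two-direction accounting and your identification of the odd-cycle leftovers as the only delicate point match the paper exactly; the paper simply asserts the shift equality without writing out the charging argument you sketch.
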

\begin{proof}
Similarly to above, we start with the $2$-regular graph $H$ from the proof of 
Theorem~\ref{thm:2fac}, and transform it into a linear forest $L$ similarly to 
the last two proofs. This time, we try to produce paths of length $4$ by splitting 
the cycles in $H$ only at every other vertex if possible. As $H$ may contain odd 
cycles, we have to use one path of length only $3$ for every one of the $o$ odd 
cycles in $H$. Thus, $L$ has exactly $(284n+o)/2$ component paths, and a maximum 
rainbow matching in $L$ is exactly $(284n+o)/2$ greater than a maximum rainbow 
matching in $H$. As $H$ is triangle-free, we know that $o\le 284n/5$, so it is 
NP-hard to decide if the maximal size of a rainbow matching in $L$ is above 
$(140+0.7\times 284-\eps)n$ or below $(139+0.7\times 284+\eps)n$.
\end{proof}

\begin{theorem}\label{thm:apxtree}
\mrbm\ is APX-complete, even when restricted to properly edge-colored $P_8$-free trees 
in which every color is used at most twice. Unless P=NP, no polynomial algorithm can 
guarantee an approximation ratio greater than $\frac{1689}{1694}$.
\end{theorem}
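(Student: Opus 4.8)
The plan is to glue together the components of the $P_6$-free linear forest $L$ produced in the proof of Theorem~\ref{thm:apxlf6} into a single tree, arranged so that $P_8$-freeness is preserved and the size of a largest rainbow matching changes only by a fixed additive constant; this is the same strategy as in the proof of Theorem~\ref{thm:apxpath}, but now the target shape is a tree rather than a path. Recall that $L$ is properly edge-colored, uses every color at most twice, has $k=(284n+o)/2$ components, each isomorphic to $P_5$ or $P_4$, and that (by the last line of the proof of Theorem~\ref{thm:apxlf6}) it is NP-hard to decide whether a largest rainbow matching in $L$ has size above $(140+0.7\times 284-\eps)n$ or below $(139+0.7\times 284+\eps)n$.

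First I would construct the tree $T$. Introduce a new vertex $r$, a new pendant vertex $y$, and the edge $ry$; then, for each component $C$ of $L$, add one edge joining $r$ to a designated \emph{central} vertex of $C$ --- the middle vertex if $C\cong P_5$, the second vertex if $C\cong P_4$ --- and give all the newly added edges pairwise distinct brand-new colors. The new edges pairwise meet only at $r$, and $r$ is incident to no old edge, so $T$ is properly edge-colored and every color is still used at most twice; an edge count shows that $T$ is a tree. For $P_8$-freeness: starting from the central vertex of any component, every induced path that stays inside that component has at most three vertices, so any induced path of $T$ through $r$ has at most $3+1+3=7$ vertices; an induced path avoiding $r$ lies inside a single component and has at most five vertices; and an induced path using the pendant edge $ry$ has at most five vertices. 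Hence $T$ has no induced $P_8$.

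Next I would establish that $\mathrm{rbm}(T)=\mathrm{rbm}(L)+1$, where $\mathrm{rbm}(\cdot)$ denotes the size of a largest rainbow matching. For the inequality ``$\ge$'', extend a largest rainbow matching of $L$ by the edge $ry$: its color is new and $y$ meets nothing else, so the result is again a rainbow matching. For ``$\le$'', let $N$ be a rainbow matching of $T$; at most one edge of $N$ meets $r$, and every edge of $T$ outside $L$ meets $r$ and carries a color absent from $L$, so deleting that one edge (if present) leaves a rainbow matching contained in $E(L)$, whence $|N|\le \mathrm{rbm}(L)+1$.

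Combining this equality with the hardness statement for $L$ recalled above, it is NP-hard to decide whether a largest rainbow matching of the properly edge-colored $P_8$-free tree $T$ (in which every color is used at most twice) has size above $(140+0.7\times 284-\eps)n+1$ or below $(139+0.7\times 284+\eps)n+1$. The construction is polynomial and the two optima differ by the constant $1$, so this is an $L$-reduction; composing it with the chain of $L$-reductions from $3$-MIS built up through Theorem~\ref{thm:apxlf6}, and using that \mrbm\ lies in APX by Theorem~\ref{thm:apx}, we conclude that \mrbm\ restricted to this class is APX-complete. Letting $n\to\infty$, the ratio of the two thresholds tends to $\frac{139+0.7\times 284}{140+0.7\times 284}=\frac{337.8}{338.8}=\frac{1689}{1694}$, giving the stated inapproximability bound. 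I expect the one genuinely delicate point to be the verification that attaching each component at its \emph{central} vertex keeps every induced path through $r$ down to seven vertices --- attaching at any other vertex of a $P_5$ or $P_4$ would at once create an induced $P_8$ or longer --- with everything else a routine adaptation of the earlier proofs.
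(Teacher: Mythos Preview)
Your construction is exactly the one in the paper: attach a new hub vertex to a central vertex of every path in the $P_6$-free linear forest $L$ from Theorem~\ref{thm:apxlf6}, add one further pendant edge at the hub, and give all new edges fresh colors. Your verification of $P_8$-freeness and of $\mathrm{rbm}(T)=\mathrm{rbm}(L)+1$ is correct and in fact more carefully spelled out than the paper's own one-paragraph argument.
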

\begin{proof}
 Start with an edge-colored linear forest $L$ as in the proof of 
Theorem~\ref{thm:apxlf6}. Add an extra vertex $v$, and connect it to a central 
vertex (i.e., a vertex with maximum distance to the ends) in every path in $L$. 
Further, add one pending edge to $v$. Color the added edges with colors not 
appearing on $L$. Then the resulting tree $T$ is $P_8$-free, and a maximum 
rainbow matching in $T$ is exactly one edge larger than a maximum rainbow 
matching in $L$.
\end{proof}

All these theorems imply the following complexity result on \rbm.
\begin{corollary}
 \rbm\ is NP-complete, even when restricted to one of the following classes of 
edge-colored graphs.
\begin{enumerate}
 \item Complete graphs.
 \item Properly edge-colored paths.
\item Properly edge-colored $P_5$-free linear forests in which every color is used at 
      most twice.
\item Properly edge-colored $P_4$-free bipartite graphs in which every color is used at 
      most twice.
\item Properly edge-colored $P_8$-free trees in which every color is used at most twice.
\end{enumerate}
\end{corollary}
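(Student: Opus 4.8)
The plan is to observe that \rbm\ belongs to NP trivially and that the NP-hardness for each of the five listed graph classes is already contained in the corresponding hardness result proved above. Membership in NP is immediate: a rainbow matching with at least $k$ edges is a certificate of size $O(|V(G)|)$ whose validity --- that it is a matching, that its edges receive pairwise distinct colors, and that it has at least $k$ edges --- can be verified in polynomial time; moreover each of the five classes is recognizable in polynomial time.

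For hardness I would invoke the respective results: Corollary~\ref{thm:apxcom} for complete graphs, Theorem~\ref{thm:apxpath} for properly edge-colored paths, Theorem~\ref{thm:apxlf} for properly edge-colored $P_5$-free linear forests in which every color is used at most twice, Theorem~\ref{thm:apxp4} for properly edge-colored $P_4$-free bipartite graphs in which every color is used at most twice, and Theorem~\ref{thm:apxtree} for properly edge-colored $P_8$-free trees in which every color is used at most twice. Each of these proofs establishes something stronger than APX-hardness: it gives a polynomial-time reduction that, from an instance of \textsc{$3$-MIS}, produces a graph in the class together with two thresholds $t_1$ and $t_2$ with $t_1 > t_2 + 1$ such that it is NP-hard to decide whether the maximum rainbow matching has size above $t_1$ or below $t_2$. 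Since the two thresholds are separated by strictly more than $1$ (in fact by roughly $n$), there is an integer $k$ with $t_2 < k \le t_1$, and for that $k$ the question ``does the graph have a rainbow matching with at least $k$ edges?'' is NP-hard; this is exactly \rbm\ restricted to the class in question.

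There is essentially no obstacle. The only point deserving a word of care is that the gap thresholds in each theorem are separated by more than $1$, so that a witnessing integer $k$ indeed exists, and that the reductions are polynomial-time computable, which is clear from their descriptions.
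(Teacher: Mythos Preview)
Your proposal is correct and matches the paper's approach: the paper states the corollary without proof, merely remarking that ``all these theorems imply'' it, and you have spelled out exactly those implications (membership in NP plus NP-hardness inherited from Corollary~\ref{thm:apxcom} and Theorems~\ref{thm:apxpath}, \ref{thm:apxlf}, \ref{thm:apxp4}, \ref{thm:apxtree}). Your extra care about extracting an integer threshold $k$ from the gap instances is a nice touch that the paper leaves implicit.
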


\section{Polynomial-time Solvable Cases}\label{sec:poly}

In contrast to Theorem~\ref{thm:apxp4}, saying that \mrbm\ is hard even for $P_4$-free 
bipartite graphs, we have:

\begin{theorem}\label{thm:sfor}
In every graph $G$ which does not contain $P_4$ as a not necessarily induced subgraph, 
\mrbm\ is solvable in time $O(m^{3/2})$, where $m$ is the number of edges in $G$.
\end{theorem}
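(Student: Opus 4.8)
The plan is to first pin down the (very restrictive) structure of graphs with no subgraph isomorphic to $P_4$, then to reduce \mrbm\ on such graphs to an ordinary maximum matching problem in an auxiliary bipartite graph, and finally to invoke the Hopcroft--Karp algorithm to get the stated running time.

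\textbf{Step 1: structure.} I would begin by proving that every connected graph with no subgraph isomorphic to $P_4$ is either a star $K_{1,t}$ (with $t\ge 0$, so including $K_1$ and $K_2$) or a triangle $K_3$. A longest path in such a graph has at most two edges. If it has at most one edge, the graph has maximum degree $\le 1$, hence is $K_1$ or $K_2$. If a longest path is $a\,b\,c$, then no vertex $d\notin\{a,b,c\}$ can be adjacent to $a$ or to $c$ (that would create a $P_4$), so $d$ is adjacent only to $b$; moreover $d$ can have no neighbour other than $b$, again by the $P_4$ restriction; finally $a$ and $c$ are adjacent only when the component is exactly $K_3$ (any additional vertex attached to $b$ together with the edge $ac$ would yield a $P_4$). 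Thus the component is a star centered at $b$ or a triangle.

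\textbf{Step 2: reduction to bipartite matching.} Since any matching inside a star or a triangle consists of at most one edge, every matching of $G$ uses at most one edge per connected component. Hence a rainbow matching of $G$ is precisely a choice of a set of components together with one edge in each chosen component, all of pairwise distinct colors. I would encode this as a bipartite graph $B$ whose two sides are, respectively, the (non-trivial) components of $G$ and the colors used by $\phi$, with a component $C$ joined to a color $\gamma$ whenever some edge of $C$ has color $\gamma$. A matching of $B$ of size $k$ converts into a rainbow matching of $G$ of size $k$ by selecting, for each matched pair $(C,\gamma)$, an edge of $C$ colored $\gamma$, and conversely every rainbow matching of $G$ induces such a matching of $B$; so the maximum rainbow matching of $G$ has exactly the size of a maximum matching of $B$, and the matching of $B$ is turned into the actual rainbow matching in linear time.

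\textbf{Step 3: complexity, and the main obstacle.} Computing the components and their color sets and then building $B$ (collapsing repeated (component, color) pairs, e.g. by sorting) costs $O(m\log m)$ time; note $B$ has at most $m$ edges and at most $2m$ vertices, since distinct edges of $B$ are witnessed by disjoint nonempty sets of edges of $G$. Running Hopcroft--Karp on $B$ then finds a maximum matching in time $O(|E(B)|\sqrt{|V(B)|})=O(m^{3/2})$, which dominates everything else. The only place that needs genuine care is the structural characterization of Step~1; the real conceptual point is simply that, although colors do interact \emph{across} components (so the problem does not decompose componentwise), this interaction is captured exactly by a bipartite matching instance of size $O(m)$, leaving no combinatorial difficulty beyond a standard algorithm.
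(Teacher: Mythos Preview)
Your proof is correct and follows essentially the same approach as the paper: characterize the components as stars or triangles, reduce to a bipartite matching instance between components and colors, and apply Hopcroft--Karp. You supply more detail (the structural argument and the $O(m\log m)$ preprocessing), but the idea is identical.
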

\begin{proof}
As $G$ does not contain a $P_4$, every component of $G$ is either a star or a triangle. 
Now construct a bipartite graph $H$ with partite sets being the components of $G$ and 
the colors used in $G$. The graph $H$ has an edge between a component and a color if 
in $G$, the color appears in the component. The graph $H$ has at most as many edges 
as $G$, and rainbow matchings in $G$ correspond to matchings in $H$ of the same size. 
As we can find a maximum matching in the bipartite graph in time $O(m^{3/2})$ 
(\cite{HopKar,MicVaz,Vaz}), the same is true for $G$.
\end{proof}

Since in a forest, every $P_4$ is an induced subgraph, we have the following positive result 
complementing Theorem~\ref{thm:apxlf}:
 
\begin{corollary}
In every $P_4$-free forest $F$, \mrbm\ is solvable in time 
$O(n^{3/2})$, where $n$ is the number of vertices in $F$.
\end{corollary}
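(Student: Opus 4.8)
The plan is to reduce the corollary to Theorem~\ref{thm:sfor}, which handles graphs with no $P_4$ as a (not necessarily induced) subgraph. The single conceptual point to verify is that, in a forest, the notions ``$P_4$-free'' and ``contains no $P_4$ as a subgraph'' coincide. Indeed, any path on four vertices occurring as a subgraph of a forest must be chordless: a chord would create a cycle, contradicting acyclicity. Hence if $F$ is a $P_4$-free forest in the induced sense (the definition of $F$-free used in this paper), then $F$ contains no $P_4$ as a not necessarily induced subgraph either, so Theorem~\ref{thm:sfor} applies and \mrbm\ on $F$ is solvable in time $O(m^{3/2})$, where $m=|E(F)|$.

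It remains to replace $m$ by $n=|V(F)|$ in the running time. Since $F$ is a forest, $m\le n-1$, so $O(m^{3/2})=O(n^{3/2})$, which gives the claimed bound.

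There is essentially no obstacle here; the only thing requiring a moment's care is the equivalence of the two $P_4$-freeness notions, which hinges solely on acyclicity. (One should also note that if $F$ has isolated vertices they contribute nothing to a rainbow matching and may be ignored; the construction in the proof of Theorem~\ref{thm:sfor}, producing the auxiliary bipartite graph on components and colors, is unaffected.)
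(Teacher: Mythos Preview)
Your proof is correct and follows exactly the same approach as the paper: observe that in a forest every $P_4$ subgraph is automatically induced, apply Theorem~\ref{thm:sfor}, and use $m\le n-1$ to get the $O(n^{3/2})$ bound. The paper states this even more tersely, as a one-line remark preceding the corollary.
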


In contrast to Theorem~\ref{thm:apxtree}, saying that \mrbm\ is hard even for $P_8$-free 
trees, we have:

\begin{theorem}\label{thm:stree}
In every tree $T$ which does not contain $P_7$ as a subgraph, \mrbm\ is solvable in time 
$O(n^{7/2})$, where $n$ is the number of vertices in $T$.
\end{theorem}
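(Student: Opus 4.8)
The plan is to exploit two features of a $P_7$-free tree $T$: it is very shallow, and $P_7$-freeness forces essentially all of its edges to be pendant-like, so that after a bounded amount of guessing the problem collapses to an ordinary bipartite matching, exactly as in the proof of Theorem~\ref{thm:sfor}.

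First I would record the structure. Since in a forest every path subgraph is induced, $T$ has diameter at most $5$ and hence radius at most $3$. Pick a center $r$, root $T$ at $r$, and write $c_1,\dots,c_t$ for the children of $r$; then every subtree hanging below a child of $r$ has height at most $2$. If two distinct children of $r$ both carried a subtree of height $2$, their deepest vertices would be joined by a path on $2+1+1+2=6$ edges, that is, a $P_7$; hence at most one child $c_1$ of $r$ carries a height-$2$ subtree (if none does, take $c_1$ to be an arbitrary child; if $T$ is edgeless there is nothing to prove). Consequently the children of every other child $c_2,\dots,c_t$ of $r$ are leaves, and the grandchildren of $c_1$ are leaves; write $d_1,\dots,d_s$ for the children of $c_1$. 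The key consequence is that the only edges of $T$ whose two endpoints both have degree at least $2$ are incident with $r$ or with $c_1$: every other edge has a leaf endpoint, so placing it in a matching forbids only one further vertex and one further color.

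Next I would do the guessing. In any rainbow matching $M$ of $T$, at most one edge is incident with $r$ and at most one is incident with $c_1$; removing these (at most two) edges from $M$ leaves a rainbow matching consisting only of edges with a leaf endpoint that avoid $r$ and $c_1$, that is, pendant edges at one of $c_2,\dots,c_t,d_1,\dots,d_s$. So the algorithm runs over all pairs $(e_1,e_2)$ with $e_1\in E_T(r)\cup\{\emptyset\}$ and $e_2\in E_T(c_1)\cup\{\emptyset\}$ for which $\{e_1,e_2\}$ is a rainbow matching --- there are $O(n^2)$ of them --- and for each such pair forms, as in the proof of Theorem~\ref{thm:sfor}, a bipartite graph $H_{e_1,e_2}$: one side is $\{c_2,\dots,c_t,d_1,\dots,d_s\}$ with the (at most two) vertices covered by $e_1,e_2$ deleted, the other side is the set of colors with $\phi(e_1),\phi(e_2)$ deleted, and a vertex $z$ is joined to a color $x$ whenever there is a pendant edge at $z$ of color $x$. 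Because distinct pendant edges never share a non-leaf vertex, the rainbow matchings of $T$ that extend $\{e_1,e_2\}$ and otherwise use only pendant edges are exactly the matchings of $H_{e_1,e_2}$, of the same size; taking the best pair $(e_1,e_2)$ together with a maximum matching of $H_{e_1,e_2}$ therefore produces a maximum rainbow matching of $T$.

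Finally, for the running time: each $H_{e_1,e_2}$ has $O(n)$ vertices and $O(n)$ edges (at most one per pendant edge of $T$) and is built in $O(n)$ time, so a maximum matching is found in $O(n^{3/2})$ time by Hopcroft--Karp~\cite{HopKar}; over the $O(n^2)$ pairs this gives the claimed bound $O(n^{7/2})$. The step I expect to be the real work is the structural claim --- that $P_7$-freeness confines all edges with two non-leaf endpoints to the two vertices $r$ and $c_1$; granted this, both the correctness of the guess-then-match scheme and the decoupling of the residual problem into a plain bipartite matching are routine to check.
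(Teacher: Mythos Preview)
Your proposal is correct and takes essentially the same approach as the paper: the paper picks the two most central vertices $x,y$ on a longest path and observes that $T-\{x,y\}$ is a star forest, then guesses the at most two matching edges incident to $\{x,y\}$ and invokes Theorem~\ref{thm:sfor} on the remainder --- your pair $\{r,c_1\}$ is precisely this central pair, and your explicit bipartite graph $H_{e_1,e_2}$ just unpacks what Theorem~\ref{thm:sfor} does. The structural analysis you flag as ``the real work'' is stated in the paper in one line (every vertex has distance at most $2$ to $\{x,y\}$), which is equivalent to your rooted argument.
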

\begin{proof}
As $T$ is $P_7$-free, we can find an edge $xy$ in $T$ such that $G-\{x,y\}$ is a forest 
consisting of stars; all we have to do is to pick the two most central vertices in a 
longest path in $T$ and note that every vertex of $T$ must have distance at most $2$ 
to $\{ x,y\}$. Every matching $M$ can contain at most $2$ edges incident to $\{x,y\}$. 
Once we have decided on these at most two edges (less than $n^2$ choices), we are left 
with the task of finding a rainbow matching in a $P_4$-free forest, which can be done 
in time $O(n^{3/2})$ by Theorem~\ref{thm:sfor}. This gives a total time of $O(n^{7/2})$.
\end{proof}

The following theorem describes a more general setting of Theorem~\ref{thm:stree}:

\begin{theorem}\label{thm:sfor7}
In every forest $F$ which does not contain $P_7$ as a subgraph, \mrbm\ is solvable in time 
$O\left(\frac{1}{2^kk^{2k}}n^{(4k+3)/2}\right)$, where $n$ is the number of vertices in 
$F$ and $k$ is the number of components in $F$.
\end{theorem}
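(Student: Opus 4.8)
The plan is to generalize the argument from Theorem~\ref{thm:stree} by handling each of the $k$ components separately, using the same structural observation that a $P_7$-free tree has an edge $xy$ with all vertices within distance~$2$ of $\{x,y\}$. For each component $T_i$, pick such an edge $x_iy_i$; then $T_i - \{x_i,y_i\}$ is a disjoint union of stars. A rainbow matching $M$ in $F$ contains, for each $i$, at most two edges incident to $\{x_i,y_i\}$. So the algorithm is: enumerate, for each of the $k$ components, a choice of at most two ``boundary'' edges incident to $\{x_i,y_i\}$; there are $O(n^2)$ such choices per component, hence $O(n^{2k})$ global choices. For each global choice, delete the chosen edges' endpoints (and discard all edges sharing a color with a chosen edge), leaving a $P_4$-free forest; apply Theorem~\ref{thm:sfor} to find a maximum rainbow matching there in time $O(m^{3/2}) = O(n^{3/2})$, and add the boundary edges back. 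Take the best solution over all choices.

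The correctness is immediate from the same reasoning as in Theorem~\ref{thm:stree}: every rainbow matching decomposes into its boundary edges (at most two per component, and these form one of the enumerated choices up to the choice being optimal) plus a rainbow matching avoiding all the $x_i,y_i$, which lives in the leftover $P_4$-free forest; conversely every enumerated choice that is ``consistent'' (the boundary edges are pairwise disjoint, pairwise distinctly colored, and distinctly colored from the leftover part, which we enforce by deleting conflicting colors) extends to a rainbow matching of $F$. The running time is the number of choices times the cost per choice, i.e. $O(n^{2k}\cdot n^{3/2}) = O(n^{(4k+3)/2})$.

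The remaining point is to sharpen the bound to the stated $O\!\left(\frac{1}{2^k k^{2k}} n^{(4k+3)/2}\right)$, i.e. to win the factor $\frac{1}{2^k k^{2k}}$. This comes from a more careful count of the boundary choices: the $n^{2k}$ estimate is wasteful because the $k$ components partition the vertex set, so the numbers $n_1,\dots,n_k$ of vertices in the components sum to $n$, and the number of choices is at most $\prod_{i=1}^k O(n_i^2)$ rather than $O(n^{2k})$. By the AM--GM inequality, $\prod n_i^2 \le (n/k)^{2k} = n^{2k}/k^{2k}$, which already supplies the $k^{-2k}$ saving; the extra $2^{-k}$ comes from noting that the ``at most two edges incident to $\{x_i,y_i\}$'' really means an \emph{unordered} pair (or singleton, or empty set), so the per-component count is roughly $\binom{n_i}{2} \approx n_i^2/2$ rather than $n_i^2$, contributing a further factor $2^{-k}$.

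The main obstacle I expect is bookkeeping rather than any genuine difficulty: one must be careful that, when we fix boundary edges across \emph{all} components simultaneously, the color constraints couple the components (a color used on a boundary edge of $T_i$ forbids that color elsewhere), so the ``leftover'' $P_4$-free forest must have all edges of already-used colors removed before invoking Theorem~\ref{thm:sfor}; and one must verify that this removal, plus the at-most-two-per-component bound, still captures every optimal rainbow matching. One should also double-check the degenerate cases (components that are already $P_4$-free, so no boundary edge need be chosen, and components too small to contain an edge $xy$ of the required type). None of this affects the asymptotics, so the stated bound follows.
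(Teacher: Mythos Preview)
Your proposal is correct and follows essentially the same approach as the paper: choose a central edge $x_Ty_T$ in each component so that removing its endpoints leaves a star forest, enumerate the at most two matching edges per component incident to $\{x_T,y_T\}$, and solve the residual $P_4$-free instance via Theorem~\ref{thm:sfor}. The paper states the choice count directly as at most $\binom{n/k}{2}^k<\frac{n^{2k}}{2^kk^{2k}}$, which is exactly your AM--GM argument applied to $\prod_i\binom{n_i}{2}$; your additional remarks on cross-component color bookkeeping and degenerate components are details the paper leaves implicit.
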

\begin{proof}
 As in proof of Theorem~\ref{thm:stree}, find a central edge $x_Ty_T$ in every component 
$T\subseteq F$, such that $\bigcup\big(T-\{x_T,y_T\}\big)$ is a forest consisting of stars. 
Once we have decided on the at most $2k$ edges in a matching incident to the $x_Ty_T$---a 
total of at most ${n/k\choose 2}^k<\frac{n^{2k}}{2^kk^{2k}}$ choices---we are left with 
the task of finding a rainbow matching in a $P_4$-free forest, which can be done in 
$O(n^{3/2})$ by Theorem~\ref{thm:sfor}. This gives a total time of 
$O\left(\frac{1}{2^kk^{2k}}n^{(4k+3)/2}\right)$.
\end{proof}

\begin{corollary}
\mrbm\ is solvable in polynomial time for $P_7$-free forests with bounded number of components.
\end{corollary}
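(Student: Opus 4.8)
The plan is to derive this corollary directly from Theorem~\ref{thm:sfor7}, observing that a fixed bound on the number of components turns the seemingly super-polynomial time bound into a genuine polynomial. First I would fix the bound: suppose every forest in the class under consideration has at most $k_0$ components, where $k_0$ is a constant not depending on the input. For any particular input forest $F$ on $n$ vertices, Theorem~\ref{thm:sfor7} applies with $k = k_0$ (or with the actual number $k \le k_0$ of components of $F$, which only improves the bound), giving a running time of $O\!\left(\frac{1}{2^{k_0}k_0^{2k_0}}\, n^{(4k_0+3)/2}\right)$.

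The key remaining step is just to notice that, with $k_0$ fixed, the factor $\frac{1}{2^{k_0}k_0^{2k_0}}$ is a constant, and the exponent $(4k_0+3)/2$ is a constant; hence the bound is $O(n^{(4k_0+3)/2})$, a polynomial in $n$. One subtlety worth a sentence: the formula in Theorem~\ref{thm:sfor7} has $k$ appearing both in the exponent and in the leading constant, so it is important that we are bounding $k$ by a constant rather than, say, letting it grow slowly with $n$ — in the latter case the exponent would grow and the bound would no longer be polynomial. Since "bounded number of components" means precisely that $k$ is bounded by an absolute constant over the whole class, this concern does not arise.

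I do not expect any real obstacle here; the corollary is essentially an immediate specialization of Theorem~\ref{thm:sfor7}. The only thing that requires care is stating clearly that the degree of the resulting polynomial depends on the bound $k_0$ (so the algorithm is polynomial for each fixed class, but not uniformly polynomial as $k_0 \to \infty$ — that finer question is exactly what Theorem~\ref{thm:sfor7} and the fixed-parameter-tractability results in Section~\ref{sec:fixed} are about). A one-line proof suffices: apply Theorem~\ref{thm:sfor7} and absorb all $k$-dependent quantities into the $O$-notation constant, since $k = O(1)$.
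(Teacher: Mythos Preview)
Your proposal is correct and matches the paper's approach exactly: the corollary is stated immediately after Theorem~\ref{thm:sfor7} without proof, since it follows at once by substituting a constant bound for $k$ in the running time. Your one-line summary---apply Theorem~\ref{thm:sfor7} and absorb the $k$-dependent quantities into the $O$-constant---is precisely the intended argument.
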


\section{Fixed Parameter Aspects}\label{sec:fixed}

An approach to deal with NP-hard problems is to fix a parameter when solving the problems. 
A problem parameterized by $k$ is {\it fixed parameter tractable}, {\it fpt} for short, 
if it can be solved in time $f(k)\cdot n^{O(1)}$, or, equivalently, in time 
$O\left(n^{O(1)}+f(k)\right)$, where $f(k)$ is a computable function, depending only on 
the parameter $k$. 
For an introduction to parameterized complexity theory, see for instance~\cite{DF,FG,Nie}.

Observe that \mrbm\ is fpt, when parameterized by the size of the problem solution. 
In case of properly edge-colored inputs, this is a consequence of Lemma~\ref{lemma:CL}, 
and of the fact that {\sc MIS} is fpt for $(K_7-e)$-free graphs (\cite{DLMR}). 
In the general case, rainbow matchings can be seen as matching (set packing) in certain 
$3$-uniform hypergraphs, hence \mrbm\ is fpt by a result of Fellows et~al.~\cite{FKNRRSTW}.

Recall that \mrbm\ is already hard for $P_8$-free trees. In view of Theorem~\ref{thm:sfor7}, 
we now consider \mrbm\ for $P_7$-free forests, parametri\-zed by $k$, the number of components 
in the inputs. Formally, we want to address the following parameterized problem:

\medskip
\noindent
$k$-{\sc forest rainbow matching}\\[1ex] 
\begin{tabular}{lp{0.8\textwidth}}
Instance: & A $P_7$-free forest $F$ with $k$ components containing a $P_4$.\\
Parameter: & $k$.\\
Output: & A maximum rainbow matching in $F$.
\end{tabular}

\bigskip
\noindent
Theorem~\ref{thm:sfor5} below shows that $k$-{\sc forest rainbow matching} is fpt for $P_5$-free 
forests.

\begin{theorem}\label{thm:sfor5}
In every forest $F$ which does not contain $P_5$ as a subgraph, \mrbm\ is solvable in time 
$O(n+2^kk^{3})$, where $n$ is the number of vertices in $F$ and 
$k$ is the number of components in $F$ containing a $P_4$.
\end{theorem}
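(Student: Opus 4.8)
The plan is to exploit the very rigid structure of $P_5$-free forests. A tree of diameter at least $4$ contains $P_5$, so every component of $F$ has diameter at most $3$; hence each component is either a \emph{star} $K_{1,q}$ (diameter $\le 2$, and thus $P_4$-free) or a \emph{double star}, obtained from an edge $x_iy_i$ by attaching at least one pendant leaf to $x_i$ and at least one to $y_i$ (diameter $3$, containing a $P_4$). Thus the $k$ components of $F$ that contain a $P_4$ are exactly the double stars $D_1,\dots,D_k$, and every other component is a star. In any rainbow matching $M$, a star contributes at most one edge, while $M\cap D_i$ is necessarily one of: the empty set; the middle edge $x_iy_i$; a single pendant edge at $x_i$ or at $y_i$; or one pendant edge at $x_i$ together with one pendant edge at $y_i$ --- at most two edges. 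Writing $A_i$ and $B_i$ for the color sets occurring on the pendant edges at $x_i$ and at $y_i$ respectively, and $m_i=\phi(x_iy_i)$, the ``two-edge'' option realizes one color of $A_i$ and one distinct color of $B_i$, while the ``at most one edge'' option realizes at most one color of $A_i\cup B_i\cup\{m_i\}$.

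The algorithm has two phases. The first is an $O(n)$-time preprocessing that strips everything down to an $O(k)$-sized core. Any star containing an edge whose color appears nowhere else in $F$ is handled greedily: put that edge into the solution --- this is optimal, since the star supplies at most one edge anyway and such a color blocks nothing --- and delete the star; likewise a color of $A_i$ occurring nowhere else can be reserved in advance for the slot $x_i$. What remains is an incidence structure between the surviving stars, the $\le 2k$ double-star slots $x_i,y_i$ (and, for the branching below, ``collapsed'' slots $z_i$), and the colors that occur at least twice; the technical point is that this part can be summarized, using the transversal-matroid / Hall-deficiency structure of the star--color incidences, by an equivalent instance on $O(k)$ colors and $O(k)$ components in linear time. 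After this phase it suffices to solve \mrbm\ on a core instance with $O(k)$ components, $O(k)$ colors and $O(k)$ edges, and add the number of greedily fixed edges.

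The second phase brute-forces the core. Enumerate the $2^k$ subsets $S\subseteq\{1,\dots,k\}$, read as the set of double stars meant to contribute two edges. For a fixed $S$ build a bipartite graph $H_S$ whose left side consists of the $2|S|$ \emph{active} slots $x_i,y_i$ ($i\in S$), one slot per surviving star, and one \emph{reduced} slot $z_i$ for each $i\notin S$; whose right side is the color set; and whose edges are given by the palettes ($A_i$ for $x_i$, $B_i$ for $y_i$, the full palette for a star slot, $A_i\cup B_i\cup\{m_i\}$ for $z_i$). Then compute a maximum matching of $H_S$ subject to saturating all $2|S|$ active slots, discarding $S$ if no such matching exists. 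Since distinct left vertices receive distinct colors in a matching, the requirement ``the two pendant colors chosen in $D_i$ are different'' is automatic, and a routine case check shows that the size of such a matching equals exactly the size of the rainbow matching obtained by reading off, for each matched slot, a pendant (or middle) edge of that color; conversely every rainbow matching $M$ arises this way for $S=\{i:|M\cap D_i|=2\}$. Hence the maximum over all feasible $S$ of the matching size, plus the edges fixed in preprocessing, equals the maximum rainbow matching size. Each $H_S$ has $O(k)$ vertices and $O(k^2)$ edges, so a maximum matching together with the saturation check costs $O(k^3)$ via augmenting paths; the total is $O(2^kk^3)$, for an overall running time $O(n+2^kk^3)$.

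Correctness of the branching is immediate from the per-component description above, so the main obstacle is the linear-time preprocessing: one must show that the stars together with the repeating colors --- of which there may be $\Theta(n)$ --- can be compressed, \emph{without affecting the optimum}, into a gadget of size $O(k)$. Equivalently, the maximum-matching value of the star--color bipartite graph, viewed as a function of which $\le 2k$ colors the double stars seize, must be captured by a bounded core. I would establish this through the (submodular) rank function of the transversal matroid induced on the color set by the stars: only $O(k)$ colors can ever be ``tight'' against the double-star slots, while all remaining colors contribute a fixed amount that is found greedily in linear time.
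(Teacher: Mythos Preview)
Your structural analysis and the two-phase plan --- reduce to an $O(k)$-size core, then branch $2^k$ times and solve a bipartite matching per branch --- line up with the paper's approach, but the paper takes a lighter route at both steps.

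For the branching, the paper does not split on ``which double stars contribute two edges.'' It lets $E'$ be the set of middle edges $\{x_iy_i\}$ and branches over the (at most $2^k$) rainbow subsets of $E'$ to include in the matching; once that subset is fixed and those vertices and colours are deleted, what remains is a $P_4$-free forest, so Theorem~\ref{thm:sfor} applies verbatim --- no saturation constraint, no merged slot $z_i$, no feasibility check. Your formulation is correct but heavier than necessary.

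The substantive gap in your argument is Phase~1. The paper does not attempt any matroid compression: it simply observes that $F-E'$ is a union of (it claims at most $2k$) stars, makes each star rainbow, and then prunes each star to at most $2k$ colours. The justification is a Hall-type pigeonhole: with only $2k$ stars competing, a star holding $\ge 2k$ distinct colours is never the bottleneck in any matching, so keeping an arbitrary $2k$ of its colours loses nothing. This gives $|E(F'')|\le 4k^2$ in one linear pass. You instead try to absorb the $P_4$-free star components into an $O(k)$-size gadget via a transversal-matroid argument, but you never actually prove that such a compression exists, let alone that it can be carried out in $O(n)$ time. Computing a transversal-matroid rank is itself a bipartite-matching computation, not a linear-time one, and the sentence ``only $O(k)$ colours can ever be tight against the double-star slots'' is an assertion, not a lemma --- you have not even said what ``tight'' means precisely or why tightness can be detected greedily. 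As written, your preprocessing step is a promissory note, so the bound $O(n+2^kk^3)$ is not established by your argument; you should either supply an explicit linear-time kernelisation for the star components or fall back to the paper's simpler pruning.
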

\begin{proof}
Let $E'\subset E(F)$ be the set of edges between vertices of degree greater than 
$1$, and observe that $|E'|\le k$. Then, $F'=F-E'$ is a forest consisting of at 
most $2k$ stars.
Delete edges in $F'$ until each star is rainbow colored and has at most $2k$ edges. 
Notice that this does not change the size of a maximum rainbow matching. Call this 
new graph $F''$, and observe that $|E(F'')|\le 4k^2$. 

Now for every rainbow choice of edges in $E'$, we can solve \mrbm\ on a subgraph 
of $F''$. There are at most $2^k$ rainbow choices in $E'$, and solving \mrbm\ on 
$F''$ takes time $O(k^3)$ as in Theorem~\ref{thm:sfor}. Creating $F''$ from $F$ 
takes time $O(n)$, so the result follows.
\end{proof}

We do not know if $k$-{\sc forest rainbow matching} is fpt for $P_6$-free forests or 
$W[1]$-hard.  
Note that $k$-{\sc forest rainbow matching} is in XP by Theorem~\ref{thm:sfor7}. 

\begin{remark}\label{thm:sfor67}
If $k$-{\sc forest rainbow matching} for $P_6$-free forests is in $W[i]$, then 
so is $k$-{\sc forest rainbow matching} for $P_7$-free forests.
\end{remark}
\begin{proof}
Let $F$ be a $P_7$-free forest with $k$ components containing a $P_4$.
 For every tree $T\subseteq F$ containing a $P_6$, find the central edge $x_Ty_T$. 
Once you choose one of the $2^k$ possibilities to include these central edges in a 
rainbow matching, delete the not-chosen edges, and delete the chosen edges together 
with their neighborhood, the remaining graph contains at most $2k$ components 
containing a $P_4$, and no components containing a $P_6$.
\end{proof}

\section{Concluding Remarks}\label{sec:conclusion}
We have shown that it is NP-hard to approximate \mrbm\ within certain ratio bounds 
for very restricted graph classes. Implicit in our results is the following dichotomy 
theorem for forests and trees in terms of forbidding paths: \rbm\ is NP-complete for 
$P_5$-free forests ($P_8$-free trees), and is polynomially solvable for $P_4$-free forests ($P_7$-free trees). 

We have also proved that $k$-{\sc forest rainbow matching} is fixed parameter tracta\-ble 
for $P_5$-free forests. What can we find out about the parameterized complexity in the only open case of $P_6$-free 
(and equivalently, $P_7$-free) forests? 

Another open problem of independent interest is the computational complexity of recognizing color-line graphs: Given a graph $G$, does there exist an edge-colored graph 
$H$ such that $G=C\!L(H)$? 
Note that it is well-known that line graphs can be recognized in linear time.

\bibliographystyle{amsplain}

\end{document}